\newcommand{\eal}{\end{align}}
\newcommand{\beq}{\begin{equation}}
\newcommand{\eeq}{\end{equation}}
\newcommand{\bea}{\begin{eqnarray}}
\newcommand{\eea}{\end{eqnarray}}
\newcommand{\beas}{\begin{eqnarray*}}
\newcommand{\eeas}{\end{eqnarray*}}
\newcommand{\ba}{\begin{array}}
\newcommand{\ea}{\end{array}}
\newcommand{\bit}{\begin{itemize}}
\newcommand{\eit}{\end{itemize}}
\newcommand{\ben}{\begin{enumerate}}
\newcommand{\een}{\end{enumerate}}
\newcommand{\bR}{{\mathbb R}}
\newcommand{\bZ}{{\mathbb Z}}
\newtheorem{theorem}{Theorem}[section]
\newtheorem{proof}{Proof}[section]
\newtheorem{assumption}{Assumption}
\newtheorem{remark}{Remark}[section]
\newtheorem{lemma}{Lemma}[section]
\begin{document}

\date{}

\title{Adaptive model predictive control for constrained, linear time varying systems}
\author{M. Tanaskovic, L. Fagiano, and V. Gligorovski \thanks{M. Tanaskovic is with Singidunum University, Belgrade, Serbia. Email adresses: mtanaskovic@singidunum.ac.rs. L. Fagiano is with Dipartimento di Elettronica, Informazione e Bioingegneria, Politecnico di Milano, Milano, Italy.  Email adresses: lorenzo.fagiano@polimi.it. V. Gligorovski is with Electrical Engineering Department, University of Belgrade, Belgrade, Serbia. Email adresses: voja95@gmail.com.}}

\maketitle

\section{Introduction}\label{S:intro}
This manuscript contains technical details of recent results developed by the authors on adaptive model predictive control
for constrained linear, time varying systems.

\section{Problem Statement}\label{S:problem}
We consider a discrete-time, linear time varying (LTV), multiple input, multiple output (MIMO) system with $n_u$ inputs and $n_y$ outputs. The system is known to be asymptotically stable, but the exact dynamics and the way they change over time are not known. We denote the vector of control inputs at time step $t\in \bZ$ by $u(t)=[u_1(t),\hdots,u_{n_u}(t)]^T$, where $u_i(t)\in \bR,\,i=1,\hdots,n_u$ are the individual plant inputs and $^T$ stands for the matrix transpose operator. In addition, we denote the vector of plant outputs by $y(t)=[y_1(t),\hdots,y_{n_y}(t)]^T$, where $y_j(t)\in \bR ,\, j=1,\hdots, n_y$ are the individual plant outputs. 
At each time step, the dynamic relation between the inputs and the outputs can be described by a linear model of the following form:
\beq\label{Eq:plant_output}
\begin{aligned}
y_j(t)&= H_{j}^T(t) \varphi(t)+d_j(t),\, j=1,\hdots,n_y,
\end{aligned}
\eeq
where $\varphi(t)\in \bR^{m}$ is a regressor vector with $m$ elements, that evolves over time according to the following linear model:
\beq\label{Eq:regressor_evolution}
\begin{aligned}
\varphi(t+1) = F \varphi(t)+Gu(t),
\end{aligned}
\eeq
where $F \in \bR^{m\times m}$ and $G \in \bR^{m\times n_u}$ are known matrices that depend on the considered model parametrization.
\begin{remark}\label{R:A_and_B}
Equations \eqref{Eq:plant_output}-\eqref{Eq:regressor_evolution} cover a broad range of  linear system parameterizations that are used in practice. For example, when  $n_u=1$ and a Finite Impulse Response (FIR) plant model is used, $F$ and $G$ have the following structure:
\beq\label{Eq:FIR_example}
F=\left[ \ba{ccccc} 0 & 0 & \hdots & 0 & 0\\ 1 & 0 & \hdots & 0 & 0\\\vdots & \vdots & \ddots & \vdots & \vdots\\ 0 & 0 & \hdots & 1 & 0 \ea \right], \,\, G=\left[ \ba{c} 1\\0 \\\vdots \\ 0 \ea \right].
\eeq
For the case  $n_u>1$, $F$ and $G$ can be obtained by block diagonalizing the matrices in \eqref{Eq:FIR_example}. Moreover, suitable $F$ and $G$ matrices can be derived for Laguerre \cite{Laguere}, Kautz \cite{Kautz} or generalized basis functions \cite{orthogonal1} parameterizations.
\end{remark}

\begin{remark}\label{R:Same_m}
Note that the same regressor vector is assumed here for all the plant outputs in order to simplify the notation. All the results can easily be extended to the case when different regressor vectors are used for different outputs.
\end{remark}

In \eqref{Eq:plant_output}, the vector  $d(t)=[d_1(t),\hdots,d_{n_y}(t)]^T$, where $d_j(t)\in \bR, \, j=1,\hdots,n_y$, accounts for exogenous additive disturbances and the effects of unmodeled dynamics on the outputs.

Each of the vectors $H_j(t)\in\bR^{m}$ in \eqref{Eq:plant_output} contains the model parameters that describe the influence of $\varphi$ to the plant output $j$ at time step $t$. Defining the matrix $H(t)\in \bR^{n_y\times m}$ as $H(t) \doteq \left[H_1(t), \hdots, H_{n_y}(t)\right]^T$, the dependence of the plant output on the regressor and the disturbance vectors at time step $t$ can be written as:
\beq\label{Eq:plant_output_matrix}
y(t)=H(t)\varphi(t)+d(t).
\eeq
The measured output available for feedback control is corrupted by noise. In particular, the vector of \emph{measured} plant outputs $\tilde{y}(t)$ is given by:
\beq\label{Eq:measured_output}\nonumber
\tilde{y}(t)=y(t)+v(t),
\eeq
where $v(t)=[v_1(t),\hdots,v_{n_y}(t)]^T$ and $v_j(t),\, j=1,\hdots,n_y$ are the individual measurement noise terms that affect each of the measured plant outputs.

\begin{assumption}\label{A:disturbance}(Prior assumption on disturbance and noise)
$d$ and $v$ are bounded as:
\beq\label{Eq:dist_bound}
\ba{lll}
|d_j(t)|&\leq&\epsilon_{d_j}\\
|v_j(t)|&\leq&\epsilon_{v_j}
\ea,\,\forall t\in\bZ,\,\forall j=1,\hdots,n_y,
\eeq
where $\epsilon_{d_j}$ and $\epsilon_{v_j}$ are positive scalars.\
\end{assumption}

We further introduce two additional assumptions on the  system to be controlled. In particular, we assume that, although the system is time varying and the matrix $H(t)$ may change from one time step to the other, the rate of this change is bounded.
\begin{assumption}\label{A:rate_of_change}(Assumption on the bounds on parameter rate of change)
\beq\label{eq:dtheta_bound}
H(t)-H(t-1)=\Delta H(t) \in \mathcal{D}, \forall t \in \bZ,
\eeq
where
\beq\label{eq:change_bounds}
\mathcal{D}\doteq\left\{\Delta H \in \bR^{n_y\times m}: K_j\Delta H_j \leq l_j, j=1,\hdots, n_y\right \},
\eeq
and $K_j\in \bR^{n_{\Delta_j}\times m}$ and $l_j \in \bR^{n_{\Delta_j}},j=1,\hdots,n_y$ are known matrices and vectors that each define a number $n_{\Delta_j}$ of linear inequalities forming nonempty, closed and convex sets, i.e. polytopes.
\end{assumption}
Moreover, we assume that there exists a closed and convex set that is guaranteed to contain the time varying plant parameters at all times.
\begin{assumption}\label{A:system_class}(Assumption on the bounds on parameter values)

The plant model parameters belong to the following parameter set at all times: $H(t)\in \Omega, \forall t\in \bZ$, with
\beq\label{Eq:system_set}
\Omega\! \doteq\! \left\{\!H\!\in \!\bR^{n_y\times m}\!:\! A_{j0}H_j\leq b_{j0}, j=1,\hdots, n_y \!\right\}\!,
\eeq
where the inequalities in \eqref{Eq:system_set} should be interpreted as element-wise inequalities and each matrix $A_{j0}\in \bR^{r_{j0}\times m}$ and vector $b_{j0} \in \bR^{r_{j0}}$ define a nonempty, closed and convex set, i.e. a polytope with $r_{j0}$ faces.
\end{assumption}
\begin{remark}\label{R:justfication2}
Note that Assumptions \ref{A:rate_of_change} and \ref{A:system_class} are not restrictive in practice. In fact, although the system dynamics are generally unknown, the physical principles of operation for any stable system define bounds on the possible values of model parameters. These bounds may be used to define the set $\Omega$ in \eqref{Eq:system_set}. For an example on how to construct such a set for a realistic problem of building climate control, interested reader is referred to \cite{IFAC_building}. Moreover, for any adaptive control scheme to be applicable in practice, the change of the system dynamics must occur with time constants larger than those of the input-output system behavior. Therefore, it is reasonable to assume existence of bounds on the rate of change of the system dynamics. 
\end{remark} 
The control objective is to track a given output reference and reject disturbances over a possibly very long time horizon  $T$ ($T\gg m$), while enforcing input and output constraints:
\begin{subequations}\label{Eq:problem_cost}
\begin{eqnarray}
&
\begin{aligned}
\min\limits_{u(0),\hdots,u(T)}\sum_{t=0}^{T} &\left( y(t)-y_{\text{des}}(t) \right)^TQ \left( y(t)-y_{\text{des}}(t) \right)\\
                                                              &+u(t)^T S u(t) + \Delta u(t)^T R \Delta u(t) \end{aligned}\\\label{Eq:IO_constraints}
&\begin{aligned}
& \text{Subject to},\,\forall t\in[0,T] \\
&\ba{lllll}
C_uu(t)&\leq &g_u\\
C_{\Delta u} \Delta u(t)&\leq&g_{\Delta u}\\
C_y y(t)&\leq& g_y
\ea
\end{aligned}
\end{eqnarray} 
\end{subequations}
where $y_{\text{des}}(t)\in \bR^{n_y}$ is the desired output reference, $Q \in \bR^{n_y\times n_y}$, $S\in \bR^{n_u\times n_u}$ and $R \in \bR^{n_u \times n_u}$ are positive semi-definite weighting matrices selected by the control designer, and $\Delta u(t)=u(t)-u(t-1)$ is the rate of change of the control input. The element-wise inequalities in \eqref{Eq:IO_constraints} define convex sets through the matrices $C_u\in\bR^{n_i\times n_u}$, $C_{\Delta u}\in\bR^{n_{\Delta u}\times n_u}$, $C_y\in\bR^{n_o\times n_y}$ and the vectors $g_u\in\bR^{n_i}$, $g_{\Delta u}\in\bR^{n_{\Delta u}}$, $g_y\in\bR^{n_o}$, where $n_u$, $n_{\Delta u}$ and $n_y$ are the number of linear constraints on the inputs, input rates, and outputs, respectively. We assume that the set defining the constraints on $\Delta u(t)$ contains the origin and that the constraint set of $u(t)$ is compact, which are assumptions that are satisfied in most practical problems.

\section{Adaptive control algorithm}\label{S:Overall_algorithm}
The optimization problem \eqref{Eq:problem_cost} is generally intractable. As a feasible approximate solution, we propose the use of a receding horizon control policy that relies on two steps: 1) a recursive set membership identification that tracks the set of all possible model parameters (feasible parameter set) consistent with initial assumptions and data, and 2) a  model predictive controller that exploits the model set to robustly enforce constraints while optimizing the plant behavior. The approach is outlined in Algorithm \ref{A: Main_algorithm}.

\begin{algorithm}\label{alg:controller}
	At time step $k$:
	\begin{itemize}
		\item[1)]Compute the current feasible parameter set by taking into account the latest output measurement and considering the worst case parameter change. Calculate a nominal model of the plant based on the updated feasible parameter set;
		\item[2)] Compute an optimal input sequence that minimizes a cost function with respect to the nominal model, and guarantees robust satisfaction of constraints for all parameters inside the feasible parameter set, also taking into account the possible future parameter changes;
		\item[3)] Apply the first input from the sequence, set $k=k+1$ and go to 1).
	\caption{Proposed adaptive receding horizon control algorithm.}\label{A: Main_algorithm}
\end{itemize}
\end{algorithm}

We now describe in detail these two main steps.

\subsection{Recursive set membership identification algorithm}\label{S:SM_ID}

The proposed recursive set membership identification algorithm is based on the fact that, due to Assumption \ref{A:disturbance}, for each of the plant outputs, at any given time step $t$, the absolute difference between the output  measurement and the output prediction based on the plant model can not be larger then the sum of the corresponding disturbance and noise bounds. Therefore, each new measurement collected from the plant at time step $t$, defines a set to which the parameter matrix $H(t)$ is guaranteed to belong to at time step $t$:
\beq\label{eq:slabs_t}
\mathcal{S}_t(t) \doteq \left\{ \ba{ll} H \in \bR^{n_y\times m}:& \left|H_j^T\varphi (t) - \tilde{y}_j(t)\right|\leq \epsilon_{d_j}+\epsilon_{v_j}, \\ & j=1,\hdots, n_y\ea\right\} 
\eeq
where $\mathcal{S}_i(j)$ denotes the set that is defined by the regressor and output measurement vectors at time step $i$, i.e. $\varphi(i)$ and $\tilde{y}(i)$, and that is guaranteed to contain the model parameter matrix $H(j)$ at time step $j$. In particular, the set $\mathcal{S}_t(t)$ is formed by $n_y$ slabs that are defined by the regressor vector $\varphi(t)$ and the output measurements $\tilde{y}_j(t),j=1,\hdots,n_y$ collected at time step $t$.

In addition, we note that the relation between the model parameter matrix at time step $t$, $H(t)$, and the regressor and plant output vectors at time step $t-1$, i.e. $\varphi(t-1)$ and $y(t-1)$, can be expressed by the following equation:
\beq\label{eq:mixed_dynamics}
y(t-1)=H(t)\varphi(t-1)+d(t-1)+\vartheta(t-1),
\eeq
where $\vartheta(t-1)\in \bR^{n_y}$, $\vartheta(t-1)=[\vartheta_1(t-1),\hdots,\vartheta_{n_y}(t-1)]^T$, and $\vartheta_j(t-1)\in \bR,j=1,\hdots,n_y$ are the contributions of the unmodeled dynamics to the individual plant outputs, present due to the fact that the parameter matrix $H(t)$ is used insterad of the matrix $H(t-1)$ in order to relate the regressor vector $\varphi(t-1)$ and the output vector $y(t-1)$:
\beq
\vartheta(t-1)\doteq\left(H(t-1)-H(t)\right)\varphi(t-1).
\eeq

From Assumption \ref{A:rate_of_change}, it follows that the signal $\vartheta(t-1)$ is bounded such that it holds:
\beq
\underline{\vartheta}_j(t-1)\leq\vartheta_j(t-1)\leq\overline{\vartheta}_j(t-1), j=1,\hdots, n_y,
\eeq

where each of the bounds $\underline{\vartheta}_j(t-1)\in \bR$ and $\overline{\vartheta}_j(t-1)\in \bR$, $j=1,\hdots, n_y$ is given as the solution of the following two linear programs (LPs):
\beq\label{eq: time_varying_bounds}
\ba{ll}
\underline{\vartheta}_j(t-1) \doteq & \min\limits_{x \in \bR^m}\varphi^T(t-1)x\\
\overline{\vartheta}_j(t-1) \doteq & \max\limits_{x}\varphi^T(t-1)x\\
&\text{Subject to:}\\
& K_jx\leq l_j.
\ea
\eeq

Based on these definitions, we define the set $\mathcal{S}_{t-1}(t)$ as the set that is formed on the basis of the regressor and the output measurement vectors at time step $t-1$, i.e. $\varphi(t-1)$ and $\tilde{y}(t-1)$, and is guaranteed to contain the matrix of model parameters at time step $t$, i.e. $H(t)$:
\beq\label{eq: s_tm1_t}
\mathcal{S}_{t\!-\!1}(t)\! \!\doteq\!\!\left\{\!\! \ba{l} H\in \bR^{n_y\times m}: \\\!-\!\epsilon_{d_j}\!-\!\epsilon_{v_j}\!+\!\underline{\vartheta}_j(t\!-\!1) \leq H_j^T\varphi(t\!-\!1)\!-\!\tilde{y}_j(t\!-\!1), \\
 H_j^T\varphi(t\!-\!1)\!-\!\tilde{y}_j(t\!-\!1)\leq \epsilon_{d_j} \!+\! \epsilon_{v_j}\!+\!\overline{\vartheta}_j(t\!-\!1),\\
 j=1,\hdots n_y
\ea\!\!\!
\right\}\!\!.
\eeq
More generally, following the same logic, we may define the set $\mathcal{S}_{k}(t)$ as the set formed on the basis of the regressor and output measurement vectors at time step $k\leq t$, i.e. $\varphi(k)$ and $\tilde{y}(k)$, that is guaranteed to contain the matrix of model parameters at time step $t$, $H(t)$ as:
\beq\label{eq: s_k_t}
\mathcal{S}_{k}(t)\! \doteq\!\left\{\!\! \ba{l} H\in \bR^{n_y\times m}: \\\!-\!\epsilon_{d_j}\!-\!\epsilon_{v_j}\!+\!(t\!-\!k)\underline{\vartheta}_j(k) \leq H_j^T\varphi(k)\!-\!\tilde{y}_j(k), \\
 H_j^T\varphi(k)\!-\!\tilde{y}_j(k)\leq \epsilon_{d_j} \!+\! \epsilon_{v_j}\!+\!(t\!-\!k)\overline{\vartheta}_j(k),\\
 j=1,\hdots n_y
\ea\!
\right\}\!.
\eeq

Based on the definition of the set $\mathcal{S}_{k}(t)$ in \eqref{eq: s_k_t} and the Assumptions \ref{A:disturbance}, \ref{A:rate_of_change} and \ref{A:system_class}, we define the feasible parameter set at time step $t$, denoted by $\mathcal{F}(t)$, as the set that is guaranteed to contain all model parameter matrices at time step $t$, i.e. $H(t)$, that are consistent with the initial assumptions and the output measurements collected up to time step $t$. The feasible parameter set is given by the intersection of the set $\Omega$ and all the sets $\mathcal{S}_{k}(t), k=1,\hdots, t$:
\beq\label{eq: FPS}
\mathcal{F}(t) \doteq \Omega \cap \left(\bigcap\limits_{k=1,\hdots, t} \mathcal{S}_k(t) \right).
\eeq
According to Assumption \ref{A:system_class}, the set $\Omega$ is defined through polytopic constraints on the rows of the parameter matrix $H(t)$. Moreover, the sets $\mathcal{S}_k(t)$, $k=1,\hdots, t$ are defined through linear inequality constraints on the rows of the matrix $H(t)$, defined by the measured data. Therefore, the feasible parameter set $\mathcal{F}(t)$ is also given by polytopic constraints on the rows of the model parameter matrix $H(t)$. This means that $\mathcal{F}(t)$ can be uniquely described by a set of matrices and vectors that define the polytopic constraints on each of the rows of matrix $H(t)$:
\beq\label{eq: FPS_polytope}
\mathcal{F}(t) = \left\{ H \in \bR^{n_y \times m}:A_j(t)H_j\leq b_j(t)\right\},
\eeq
where each of the matrices and vectors $A_j(t)\in \bR^{r_j(t)\times m}, b_j(t) \in \bR^{r_j(t)}$, $j=1,\hdots, n_y$ define $r_j(t)$ linear inequalities. 

In order to use the defined feasible parameter set $\mathcal{F}(t)$ to compute the control inputs on-line, a recursive update approach is needed. To this end, we note that the matrix $A_j(t)$ can be created from the matrix $A_j(t-1), j=1,\hdots, n_y$ by appending two rows formed by the regressor vector at time step $t$, $\varphi(t)$ and that the vector $b_j(t)$ can be formed from the vector $b_j(t-1),j=1,\hdots, n_y$, by first adding the terms that should account for the possible change of the plant model with respect to the previous time step and then by appending two new rows that define the constraints related to the newly collected output measurement $\tilde{y}_j(t), j=1,\hdots, n_y$:
\beq\label{eq: matrix_vector_update}
A_j(t)\!=\!\left[ \ba{c} A_j(t\!-\!1)\\-\varphi^T(t)\\\varphi^T(t)\ea \right], \, b_j(t)\!=\!\left[\ba{c} b_j(t\!-\!1)\!+\!\Delta b_j(t\!-\!1)\\-\tilde{y}_j(t)\!+\!\epsilon_{d_j}\!+\!\epsilon_{v_j}\\
 \tilde{y}_j(t)\!+\!\epsilon_{d_j}\!+\!\epsilon_{v_j}\ea \right]\!,
\eeq  
where the vectors $\Delta b_j(t-1)\in \bR^{r_j(t-1)},j=1,\hdots, n_y$ contain the bounds on the output perturbation induced by all the  possible changes of the model dynamics from one time step to the next:
\beq
\Delta b_j(t-1)=\left[\ba{c}\bold{0}_{r_{jo}}\\-\underline{\vartheta}_j(0)\\\overline{\vartheta}_j(0)\\ \vdots \\-\underline{\vartheta}_j(t-1)\\\overline{\vartheta}_j(t-1)  \ea\right],
\eeq 
with $\bold{0}_{r_{jo}}\in \bR^{r_{j0}}$ denoting a vector of $r_{jo}$ zeros.
 
Using the recursive equation \eqref{eq: matrix_vector_update} to update the matrices  $A_j(t)$ and vectors $b_j(t)$, $j=1,\hdots, n_y$, would result, in general, in a growth of their dimension $r_j(t), \, j=1,\hdots, n_y$ by two with each new output measurement. In this way, keeping track of the matrices $A_j(t)$ and vectors $b_j(t)$ over time would become intractable. Therefore, in order to have a tractable recursive identification algorithm, we keep track of the constraints that were generated by the last $M$ measurements, where $M$ is an even number and a design parameter. In this way the dimensions of the matrices $A_j(t)$ and the vectors $b_j(t)$ remain bounded over time, such that $r_j(t)\leq r_{0j}+M, \forall j=1,\hdots, n_y, \forall t$. The parameter $M$ should be selected such that a good trade-off between conservativeness and computational complexity is reached. Namely, if $M$ is selected too small, the resulting approximation of the feasible parameter set would be conservative. On the other hand, choosing $M$ too large would require a lot of memory and computational power for the implementation of the proposed algorithm. 
\begin{remark}\label{R:selection_ofM}
In the described approach, taking into account the worst-case time variation of the system results in a growth of the uncertainty related to each collected measurement pair $(\varphi,\tilde{y})$  with time. This can be seen in \eqref{eq: s_k_t}, where the width of the hyperslab defined by a given output measurement depends on the difference between the current time step and the time step at which that measurement was taken. Therefore, as time goes on, the inequalities defined by old measurements will become redundant. This makes bounding the complexity of the feasible parameter set by discarding constraints related to past measurements a natural choice. 
\end{remark}

 Based on the described way to recursively update the matrices $A_j(t)$ and vectors $b_j(t)$, $j=1,\hdots, n_y$ and the presented strategy to bound the growth of their dimension, in Algorithm \ref{A: Recusive_identification}, we propose a recursive set membership identification algorithm that can be used to update the feasible parameter set $\mathcal{F}(t)$ at each time step. 

\begin{algorithm}
\begin{itemize}
  \item[1)]At time step $t=0$, for $j=1,\hdots, n_y$, set $A_j(0) = A_{j0}$, $ b_j(0) = b_{j0}$;
\item[2)] At time step $t>0$, calculate the regressor vector $\varphi(t)$ according to \eqref{Eq:regressor_evolution} and take the measurement vector $\tilde{y}(t)$;
\item[3)] For $j=1,\hdots,n_y$, calculate $\underline{\vartheta}_j(t)$ and $\overline{\vartheta}_j(t)$ by solving linear programs as in \eqref{eq: time_varying_bounds};
\item[4)] For $j=1,\hdots, n_y$ form the matrix $A_j(t)$ and the vector $b_j(t)$ from $A_j(t-1)$ and $b_j(t-1)$ according to \eqref{eq: matrix_vector_update};
\item[5)] For $j=1,\hdots,n_y$, if $r_j(t)>r_{j0}+M$, remove the $r_{j0}+1$ and if needed $r_{j0}+2$ row from the matrix $A_j(t)$ and vector $b_j(t)$, such that after removal it holds that $r_j(t)\leq r_{j0}+M$;
  \item[6)] Set $t=t+1$, go to 2).
\end{itemize}
\caption{Recursive algorithm for updating the feasible parameter set}\label{A: Recusive_identification}
\end{algorithm}

Algorithm \ref{A: Recusive_identification} guarantees that under the Assumptions \ref{A:disturbance}, \ref{A:rate_of_change} and \ref{A:system_class}, the actual model parameter matrix $H(t)$ always belongs to  the feasible parameter set $\mathcal{F}(t)$, as formally stated in Lemma \ref{L:nonempty_membership} later on.

In addition to the model set, the proposed identification algorithm provides a nominal model of the plant at each time step (see Algorithm \ref{A: Main_algorithm}). The latter is given by a matrix $H_{c}(t)\in \bR^{n_y\times m}$, $H_c=[H_{c,1},\hdots H_{c,n_y}]^T$, where the vectors  $H_{c,j}(t)\in \bR^{m},\, j=1,\hdots, n_y$ can be calculated by solving an LP that aims to find the point inside the feasible parameter set $\mathcal{F}(t)$ that is closest to the nominal model in the previous time step (i.e. $H_c(t-1)$):
\beq\label{Eq:central_estimate}
\begin{aligned}
&\min \limits_{H_{c,j}(t), j=1,\hdots, n_y} \sum \limits_{j=1}^{n_y}\|H_{c,j}(t-1)-H_{c,j}(t)\|_1\\
&\text{Subject to:}\\
&  A_j(t)H_{cj}(t)\leq b_j(t), \,\,\forall j=1,\hdots, n_y.
\end{aligned}
\eeq
The matrix $H_{c}(0)$ can be initialized as an arbitrary nonzero element inside the set $\Omega$.
\begin{remark}\label{R:feasibility_invalidation}
Note that if the optimization problem \eqref{Eq:central_estimate} has no feasible solution, it means that $\mathcal{F}(t)=\emptyset$, i.e. the collected data invalidate the initial assumptions. This may happen in practice if a sudden and unexpected change in the plant dynamics occurs, which violates Assumption \ref{A:rate_of_change}. In such cases,  the recursive algorithm to update the feasible parameter set could be restarted and $\mathcal{F}(t)$ could be reinitialized with the set $\Omega$ (see e.g. Assumption \ref{A:system_class}). Therefore, the fact that the feasible parameter set $\mathcal{F}(t)$ becomes empty could be used to detect abrupt changes in the system dynamics, and to properly react to such cases in practice. This aspect is interesting in the framework of fault detection techniques.
\end{remark}

\subsection{Finite horizon optimal control problem}\label{S:Control}
Let $u(k|t),\, k\in[t,t+N-1]$, $N\geq m$,  be the candidate future control moves, where the notation $k|t$ indicates the prediction at step $k\geq t$ given the information at the current step $t$. For brevity, we collect these decision variables in vector $U\doteq[u(t|t)^T \ldots u(t+N-1|t)^T]^T$. We also define the vectors of future input increments $\Delta u(k|t), \, k\in [t,t+N-1]$ as:
\beq\nonumber
\!\Delta u(k|t)\!=\!\begin{cases} u(t|t)\!-\!u(t-1)\!\! & \text{if}\, k=t\\
u(k|t)\!-\!u(k\!-\!1|t)\!\! & \text{if}\, t\!+\!1\leq k\leq t\!+\!N\!-\!1. \end{cases}
\eeq
Moreover, we define the future regressor vectors $\varphi(k|t)\in\bR^{m},\,k\in[t+1,t+N]$ as:
\begin{equation}\label{Eq:predicted_regressors}
\varphi(k|t)\!=\!\!\begin{cases} F\varphi(t)\!\!+\!\!Gu(t|t)\!\!\! & \text{if} \,\, k=t+1\\
                                                 F \varphi(k\!-\!1|t)\!\!+\!\!Gu(k\!-\!1|t) \!\!\!& \text{if} \,\, t\!+\!2\!\leq\! k\!\leq\! t\!+\!N.
\end{cases}
\end{equation}

In addition, we define the current prediction error $\hat{d}(t)\in \bR^{n_y}$ as the difference between the measured plant output and the one predicted by the nominal model at time step $t$:
\beq\label{Eq:predicion_error}
\hat{d}(t)\doteq \tilde{y}(t)-H_c(t)\varphi(t).
\eeq
Then, we consider the following cost function:
\beq \label{Eq:cost_function}
\begin{aligned}
&J(U,\tilde{y}(t),\varphi(t))\doteq\\
& \sum\limits_{k=t}^{t+N-1}\!\!\left(\hat{y}(k\!+\!1|t)\!-\!y_{\text{des}}(k\!+\!1|t)\right)^TQ(\hat{y}(k\!+\!1|t)\\&-\!y_{\text{des}}(k\!+\!1|t))\!+\! u(k|t)^TSu(k|t)+\Delta u(k|t)^TR\Delta u(k|t),
\end{aligned}\!\!
\eeq
where:
\beq\label{Eq:predicted_output}
\hat{y}(k+1|t)=H_c(t)\varphi(k+1|t)+\hat{d}(t).
\eeq
In \eqref{Eq:cost_function}, $\tilde{y}(t)$ and $\varphi(t)$ are  known parameters and $y_\text{des}(k|t),\,k\in[t+1,t+N]$, are the predicted values of the desired output. Note that, if the nominal model of the plant $H_c(t)$ were equal to the real plant, which would not change in the considered time horizon, the measurement noise $v(t)$ were zero, and the output disturbance $d(t)$ were constant, for $N=T$, minimizing the cost function \eqref{Eq:cost_function} would be equivalent to minimizing the cost function of the control objective \eqref{Eq:problem_cost}.

Satisfaction of input constraints can be enforced by the following set of inequalities:
\beq\label{Eq:input_rate}
\ba{ll}
\begin{aligned}
C_u u(k|t) & \leq & g_u\\
C_{\Delta u} \Delta u(k|t) & \leq & g_{\Delta u}
\end{aligned} & \,\,\forall k\in[t,t+N-1].
\ea
\eeq

In order to define the output constraints, we first introduce the notion of the predicted feasible parameter set, which we denote by $\mathcal{F}(k|t), \, k \in [t+1,t+N]$. These essentially propagate the feasible parameter set in the future. They are computed as if the recursive identification Algorithm \ref{A: Recusive_identification} were applied at predicted time step, but without taking into account the future output measurements, which are unknown at the current time step. The terminal predicted feasible parameter set, $\mathcal{F}(t+N|t)$, is chosen as equal to the uncertainty set $\Omega$, to which the model parameters are guaranteed to belong to at all times:
\beq\label{eq:predicted_FPS}
\mathcal{F}(k|t) = \left\{ H \in \bR^{n_y \times m}:A_j(k|t)H_j\leq b_j(k|t)\right\},
\eeq
where the predicted matrices $A_j(k|t)$ and the vectors $b_j(k|t)$, for $k\in [t+1,t+N-1]$ and $j = 1, \hdots, n_y$ are given as:
\small
\begin{equation}\label{Eq:predicted_matrices1}
A(k\!+\!1|t)\!=\!\!\begin{cases} A(k|t) &\!\! \text{if} \,\, r_j(k|t)\!\leq\! M'\\ \\
                                                 \left[\!\!\ba{c}a_{j1}(k|t)\\\vdots\\a_{jr_{j0}}(k|t)\\a_{jr_{j0}+3}(k|t)\\\vdots\\a_{jr_{j}(t)}(k|t) \ea\!\! \right]&\!\! \text{otherwise},
\end{cases}
\end{equation}

\begin{equation}\label{Eq:predicted_matrices2}
b(k\!+\!1|t)\!=\!\!\begin{cases} b(k|t)\!+\left[\!\!\ba{c}\bold{0}_{r_{jo}}\\-\underline{\vartheta}_j\left(t\!-\!\frac{r_j(t)\!-\!r_{j0}}{2}\right)\\\overline{\vartheta}_j\left(t\!-\!\frac{r_j(t)\!-\!r_{j0}}{2}\right)\\ \vdots \\-\underline{\vartheta}_j(t)\\\overline{\vartheta}_j(t)  \ea\!\right]\,\,\, \text{if}\,\,\, r_j(k|t)\!\leq\! M'\, &\\ \\
                                                \left[\!\!\ba{c}b_{j1}(k|t)\\\vdots\\b_{jr_{j0}}(k|t)\\b_{jr_{j0}+3}(k|t)-\underline{\vartheta}_j\left(k\!-\!\frac{r_j(t)\!-\!r_{j0}}{2}\right)\\            b_{jr_{j0}+4}(k|t)+\overline{\vartheta}_j\left(k\!-\!\frac{r_j(t)\!-\!r_{j0}}{2}\right) \\\vdots\\ b_{jr_{j}(t)-1}(k|t)-\underline{\vartheta}_j(t) \\b_{jr_{j}(t)}(k|t)+\overline{\vartheta}_j(t) \ea\!\! \right]\,\, \text{otherwise},&
\end{cases}
\end{equation}
\normalsize
where $a_{ji}(k|t)$ and $b_{ji}(k|t)$ denote the $i^{\text{th}}$ row of the matrix $A_{j}(k|t)$ and the vector $b_{j}(k|t)$ respectively, $r_j(k|t)=r_j(t)+2(k-t)$ represents the predicted dimension of the matrices $A_j(k)$ and the vectors $b_j(k)$ that would be obtained by using Algorithm \ref{A: Recusive_identification} if no rows would be removed (i.e. if the dimension of the matrices and vectors would be allowed to grow without limit in the future), and $M' = M+r_{j0}$ is a constant. The initial predicted matrices $A_j(t|t)$ and the vectors $b_j(t|t), j=1,\hdots, n_y$ correspond to their actual values at time step $t$:
\beq
A_j(t|t) = A_j(t),\, \, b_j(t|t) = b_j(t).
\eeq

\begin{remark}\label{R:terminal_set}
Setting $\mathcal{F}(t+N|t)=\Omega$ introduces additional conservativeness, since the set $\mathcal{F}(t+N|t)$ could be calculated from the set $\mathcal{F}(t+N-1|t)$ in the same way as for the sets $\mathcal{F}(k|t), k\in[t+1,t+N-1]$, and in general such a set would be tighter than the set $\Omega$. However, this approach enables recursive feasibility  (see Theorem \ref{L:reqursive_feasibility_tv} later on).
\end{remark}
The robust satisfaction of the output constraints is  guaranteed by enforcing them for all the parameters inside the predicted feasible parameter sets $\mathcal{F}(k|t), k\in[t+1,t+N]$ and for all disturbance realizations:
\beq\label{Eq:output}
C_yH\varphi(k|t)\!+\!\overline{d}\leq g_y, \, \forall H\! \in\! \mathcal{F}(t),\, \forall k\in[t\!+\!1,t\!+\!N],
\eeq
where $\overline{d}=[\overline{d}_1,\hdots, \overline{d}_{n_o}]^T$, and $\overline{d}_l \in \bR,\, l=1,\hdots, n_o$ are given as:
\beq\label{Eq:iter_LP}\nonumber
\overline{d}_l=\sum\limits_{j=1}^{n_y}|c_{lj}|\epsilon_{d_j},
\eeq
where $c_{lj}$ stands for the element of the $l^\text{th}$ row and $j^\text{th}$ column of the matrix $C_y$. 

However, constraints \eqref{Eq:output} can not be used directly, as this would result in an infinite-dimensional bilinear optimization problem that is very hard to solve in general. Constraints  \eqref{Eq:output} can be reformulated into a set of linear equalities and inequalities by introducing additional decision variables and using duality of linear programs. Here, we state the result related to this reformulation without giving the proof, as it is very similar to Lemma 3.2 in \cite{OurAdaptive_MPC}. To this end, we introduce the vector of auxiliary decision variables $\Lambda\doteq\left[\Lambda_1^T,\ldots,\Lambda_{n_o}^T\right]^T\in \bR^{n_oNr(t)}$, where $\Lambda_l\doteq [\lambda_l(t+1|t)^T,\ldots,$$\lambda_l(t+N|t)^T]^T,\,l=1,\ldots,n_o$, and for each $k=t+1,\ldots,t+N$, $\lambda_l(k|t) \in \bR^{r(t)}$ and $r(t)=\sum_{j=1}^{n_y} r_j(t)$.
\begin{lemma}\label{L:constraint_equivalence_1} Lemma 3.2 from \cite{OurAdaptive_MPC}

The constraints (\ref{Eq:output}) are satisfied if and only if there exist $\varphi(k|t)$, $ k\in[t+1,t+N]$ and $\Lambda$ such that the following set of inequalities is feasible:
\small
\beq\label{Eq:relaxed_constraints1}
\left.\ba{l}\!\!
A(k|t)^T\lambda_l(k|t)=\left[\ba{c}
c_{l1}\varphi(k|t)\\
\vdots\\
c_{ln_y}\varphi(k|t)
\ea\right]\\
b(k|t)^T\lambda_l(k|t)\leq o_l\!-\!\overline{d}_l\\
\lambda_l(k|t)\geq \bold 0
\ea \right\}\ba{cc}\forall l=1,\hdots,n_o\\
\forall k\!\in\![t\!+\!1,t\!+\!N]\ea
\eeq
\normalsize
with
\small
\beq\label{Eq:explanation}\nonumber
\begin{aligned}
A(k|t)&=\left[\ba{cccc} A_1(k|t)& \bold 0& \ldots&\bold 0\\
\bold 0& A_2(k|t)& \ldots& \bold 0\\
\vdots& \vdots & \ddots & \vdots \\
\bold 0&\bold 0& \ldots&A_{n_y}(k|t)\ea \right]\\
b(k|t)&=\left[\ba{c}
b_1(k|t)\\
\vdots\\
b_{n_y}(k|t)
\ea \right],
\end{aligned}
\eeq
\normalsize
where $\bold 0$ represents zero matrices of appropriate dimensions and $o_l$ is the $l^\text{th}$ element of the vector $g_y$. 
\end{lemma}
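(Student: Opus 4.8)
The plan is to reduce the semi-infinite constraint \eqref{Eq:output} to a finite system by treating it one scalar inequality and one prediction step at a time, and then to dualize the resulting worst-case linear program. Since \eqref{Eq:output} must hold for every $k\in[t+1,t+N]$ and for every one of the $n_o$ rows of $C_y$, and since these requirements do not interact, I would fix an arbitrary pair $(l,k)$, establish the equivalence for the corresponding scalar constraint, and then conjoin over all $(l,k)$ --- which is precisely the quantifier structure appearing in \eqref{Eq:relaxed_constraints1}.

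For fixed $(l,k)$, writing $c_l$ for the $l$-th row of $C_y$, the scalar constraint reads $c_l^{T}H\varphi(k|t)+\overline{d}_l\leq o_l$ for all admissible $H$. The first observation is that a linear inequality holds over an entire set if and only if it holds at the worst case, i.e. if and only if $\left(\max_{H\in\mathcal{F}(k|t)} c_l^{T}H\varphi(k|t)\right)+\overline{d}_l\leq o_l$. Expanding $c_l^{T}H\varphi(k|t)=\sum_{j=1}^{n_y}c_{lj}\,\varphi(k|t)^{T}H_j$ and recalling that membership decouples across rows as $A_j(k|t)H_j\leq b_j(k|t)$, the inner maximization is a single linear program in the stacked variable whose feasible region is described by the block-diagonal matrix $A(k|t)$ and the vector $b(k|t)$ given in the statement, with linear objective vector $[c_{l1}\varphi(k|t)^{T},\ldots,c_{ln_y}\varphi(k|t)^{T}]^{T}$.

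The core step is then strong LP duality. The dual of this maximization is a minimization of $b(k|t)^{T}\lambda_l(k|t)$ over $\lambda_l(k|t)\geq\mathbf{0}$ subject to $A(k|t)^{T}\lambda_l(k|t)$ equal to the objective vector. The ``if'' direction is weak duality: any dual-feasible $\lambda_l(k|t)$ with $b(k|t)^{T}\lambda_l(k|t)\leq o_l-\overline{d}_l$ upper-bounds the primal optimum by $o_l-\overline{d}_l$, so the robust constraint holds. The ``only if'' direction uses strong duality: if the robust constraint holds, the primal optimum is finite and at most $o_l-\overline{d}_l$, and strong duality produces a dual optimizer attaining the same value, hence a $\lambda_l(k|t)$ satisfying exactly the equalities and inequalities of \eqref{Eq:relaxed_constraints1}. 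Conjoining over all $l$ and all $k$ yields the stated system with $\Lambda$ collecting the multipliers.

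The step I expect to require the most care is the invocation of strong duality, which presupposes that the inner primal is feasible and bounded. Feasibility follows because each $\mathcal{F}(k|t)$ is a nonempty polytope (the terminal set equals $\Omega$, nonempty by Assumption \ref{A:system_class}, and the earlier predicted sets are obtained from it through the construction \eqref{Eq:predicted_matrices1}--\eqref{Eq:predicted_matrices2}). Boundedness of the objective is the genuine subtlety: should the primal be unbounded along some direction, no dual-feasible $\lambda_l(k|t)$ can exist and, simultaneously, the robust constraint is violated, so both sides of the equivalence fail together and the biconditional is preserved. Making this edge case explicit --- rather than silently assuming compactness of $\mathcal{F}(k|t)$ --- is the only point where the argument must be stated with precision; everything else is the routine translation of a robust linear inequality into its LP dual, exactly as in Lemma 3.2 of \cite{OurAdaptive_MPC}.
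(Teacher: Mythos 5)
Your proof is correct and follows exactly the route the paper itself indicates: the paper omits the proof, citing Lemma 3.2 of \cite{OurAdaptive_MPC}, whose argument is precisely this row-by-row, step-by-step dualization of the inner worst-case LP, with weak duality giving the ``if'' direction and strong duality the ``only if'' direction. Your explicit treatment of the edge cases (nonemptiness of $\mathcal{F}(k|t)$ needed to invoke strong duality, and the observation that unboundedness makes both sides of the equivalence fail simultaneously) is a sound addition rather than a deviation.
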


To guarantee recursively feasibility, we introduce an additional generalized terminal equality constraint, as done e.g. in \cite{FaTe12}:
\beq\label{Eq:end_constraint}
\varphi(t+N|t)=F\varphi(t+N|t)+Gu(t+N-1|t).
\eeq
This means that we require the terminal regressor to correspond to a steady state for the considered model structure.

For fixed values of $N$, $Q$, $S$ and $R$, we can now define the finite horizon optimal control problem (FHOCP) to be solved at each time step $t$:
\beq \label{Eq:FHOCP}
\begin{aligned}
\min
\limits_{U,\Lambda} &J(U,\tilde{y}(t),\varphi (t))\\
&\text{Subject to:}\,\,\, \eqref{Eq:input_rate},\, \eqref{Eq:relaxed_constraints1}, \, \eqref{Eq:end_constraint},
\end{aligned}
\eeq
which is a quadratic program (QP), that can be efficiently solved in general. The number of decision variables and constrains of the QP \eqref{Eq:FHOCP} depends on the chosen prediction horizon $N$ and the dimension of matrices and vectors that define the feasible parameter set $\mathcal{F}(t)$. Therefore, the computational complexity of \eqref{Eq:FHOCP} can be decreased by reducing the tuning parameter $M$, which bounds the dimension of matrices $A_j(t)$ and the vectors $b_j(t), j=1,\hdots, n_y$, at the cost of higher conservativeness as discussed in section \ref{S:SM_ID}.\\

\section{Properties of the proposed adaptive control algorithm}\label{S:Properties}

The described control algorithm guarantees recursive feasibility and robust satisfaction of both  input and output constraints. In order to formally state and prove this, we first state two results that are instrumental to prove the main result.

\begin{lemma}\label{L:nonempty_membership}
Let Assumptions \ref{A:disturbance}-\ref{A:system_class} hold. Then a feasible parameter set $\mathcal{F}(t)$ obtained by using the recursive Algorithm \ref{A: Recusive_identification} is a nonempty set that is guaranteed to contain the true model parameter matrix at each time step, i.e. $\mathcal{F}(t)\neq\emptyset$ and $H(t) \in \mathcal{F}(t), \forall t\geq 0$.
\end{lemma}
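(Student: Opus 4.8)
The plan is to prove the containment $H(t)\in\mathcal{F}(t)$ directly; since a set containing a point is nonempty, the claim $\mathcal{F}(t)\neq\emptyset$ then follows as an immediate corollary, and the two assertions collapse into one. I would argue by induction on $t$ (the base case $t=0$ being $\mathcal{F}(0)=\Omega$ together with $H(0)\in\Omega$ from Assumption \ref{A:system_class}), exploiting the recursive construction \eqref{eq: matrix_vector_update} together with the observation that the row-removal in step 5) of Algorithm \ref{A: Recusive_identification} only \emph{deletes} inequalities and hence can only \emph{enlarge} the set. Consequently it suffices to show that the true parameter matrix satisfies all the constraints that the unpruned recursion would generate, namely those defining $\Omega\cap\bigcap_{k}\mathcal{S}_k(t)$ over the retained indices $k$. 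The task then splits into membership in $\Omega$, which is exactly Assumption \ref{A:system_class}, and membership in each slab set $\mathcal{S}_k(t)$.

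The core of the argument is to verify $H(t)\in\mathcal{S}_k(t)$ for every $k\le t$. First I would substitute the measurement model $\tilde{y}_j(k)=H_j(k)^T\varphi(k)+d_j(k)+v_j(k)$ into the quantity $H_j(t)^T\varphi(k)-\tilde{y}_j(k)$ appearing in \eqref{eq: s_k_t}, obtaining
\beq\nonumber
H_j(t)^T\varphi(k)-\tilde{y}_j(k)=\big(H_j(t)-H_j(k)\big)^T\varphi(k)-d_j(k)-v_j(k).
\eeq
Next I would telescope the parameter difference as $H_j(t)-H_j(k)=\sum_{i=k+1}^{t}\Delta H_j(i)$, so that $\big(H_j(t)-H_j(k)\big)^T\varphi(k)=\sum_{i=k+1}^{t}\varphi(k)^T\Delta H_j(i)$, a sum of exactly $t-k$ single-step contributions.

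By Assumption \ref{A:rate_of_change} each increment obeys $K_j\Delta H_j(i)\le l_j$, i.e. $\Delta H_j(i)$ is feasible for the linear programs \eqref{eq: time_varying_bounds} that define $\underline{\vartheta}_j(k)$ and $\overline{\vartheta}_j(k)$; hence $\underline{\vartheta}_j(k)\le\varphi(k)^T\Delta H_j(i)\le\overline{\vartheta}_j(k)$ for each $i$. Summing over the $t-k$ terms yields $(t-k)\underline{\vartheta}_j(k)\le\big(H_j(t)-H_j(k)\big)^T\varphi(k)\le(t-k)\overline{\vartheta}_j(k)$. Combining this with the disturbance and noise bounds $|d_j(k)|\le\epsilon_{d_j}$, $|v_j(k)|\le\epsilon_{v_j}$ of Assumption \ref{A:disturbance} reproduces precisely the pair of inequalities in \eqref{eq: s_k_t}, establishing $H(t)\in\mathcal{S}_k(t)$.

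The step I expect to be most delicate is the inductive bookkeeping itself: I must confirm that the recursion \eqref{eq: matrix_vector_update}, with the widening vector $\Delta b_j(t-1)$, genuinely reproduces the slab description \eqref{eq: s_k_t} row by row. In particular I would check that the $(t-k)$-fold accumulation of the single-step bounds $\underline{\vartheta}_j(k),\overline{\vartheta}_j(k)$ emerges correctly from adding one copy of those bounds at each time step to the two rows indexed by the measurement taken at time $k$, so that after $t-k$ updates the factor $(t-k)$ appears as required. Once this correspondence between the matrix/vector rows and the retained slabs is pinned down, the monotonicity of intersection under deletion of inequalities closes the argument, and $\mathcal{F}(t)\neq\emptyset$ is inherited directly from $H(t)\in\mathcal{F}(t)$.
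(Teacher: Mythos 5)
Your proposal is correct, but it is organized differently from the paper's proof. The paper runs a genuine one-step induction on the algorithm's data structures: assuming $H(t)\in\mathcal{F}(t)$, it splits $(A_j(t),b_j(t))$ into the $\Omega$-rows and the measurement rows, bounds the one-step perturbation $e_j(t)=A''_j(t)\bigl(H_j(t+1)-H_j(t)\bigr)$ by the LP values from \eqref{eq: time_varying_bounds} (Assumption \ref{A:rate_of_change}), appends the two rows generated by the new measurement (Assumption \ref{A:disturbance}), and concludes $H(t+1)\in\mathcal{F}^\dag(t+1)\subseteq\mathcal{F}(t+1)$ because step 5) of Algorithm \ref{A: Recusive_identification} only deletes inequalities. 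Your argument instead verifies the semantic definition \eqref{eq: s_k_t} directly: by telescoping $H_j(t)-H_j(k)=\sum_{i=k+1}^{t}\Delta H_j(i)$ and bounding each term $\varphi(k)^T\Delta H_j(i)$ in $[\underline{\vartheta}_j(k),\overline{\vartheta}_j(k)]$, you get $H(t)\in\mathcal{S}_k(t)$ for every retained $k$ without any real use of an inductive hypothesis on $\mathcal{F}(t-1)$; the only inductive content left in your plan is the mechanical check that the recursion \eqref{eq: matrix_vector_update}, which adds one copy of $(-\underline{\vartheta}_j(k),\overline{\vartheta}_j(k))$ per step to the row pair of measurement $k$, accumulates to the factor $(t-k)$ appearing in \eqref{eq: s_k_t} — and that check does go through: after $t-k$ updates the two rows read exactly the two inequalities of \eqref{eq: s_k_t}. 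The trade-off is that the paper's incremental induction never needs this closed-form $(t-k)$ correspondence, so its bookkeeping is lighter, while your route makes the content of the tracked polytope explicit (each retained row pair is precisely a slab $\mathcal{S}_k(t)$), which is more transparent and, as a by-product, actually proves the paper's otherwise unproven assertion that $\mathcal{S}_k(t)$ contains $H(t)$. Both proofs rest on the same two pillars — the per-step LP bounds and the monotonicity of the feasible set under row removal — so your plan is sound once the flagged bookkeeping step is written out.
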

\begin{proof}
See the Appendix.$\hfill\square$
\end{proof}

\begin{lemma}\label{L:predicted_FPS}
Let Assumptions \ref{A:disturbance}-\ref{A:system_class} hold. Then,  when Algorithm \ref{A: Recusive_identification} is used, at each time step $t$, it holds that $\mathcal{F}(k|t+1)\subseteq \mathcal{F}(k|t),\;k\in[t+2, t+N]$.
\end{lemma}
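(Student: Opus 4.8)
The plan is to prove the inclusion by giving an explicit description of each predicted set and reducing the claim to a comparison of which past measurements contribute active constraints. By construction, every set $\mathcal{F}(k|t)$ is the intersection of $\Omega$ with a finite family of propagated slabs, one for each retained past measurement: the slab originating from the measurement collected at a time $\tau\le t$ reads, for output $j$,
\[
-\epsilon_{d_j}-\epsilon_{v_j}+(k-\tau)\underline{\vartheta}_j(\tau)\le H_j^T\varphi(\tau)-\tilde{y}_j(\tau)\le \epsilon_{d_j}+\epsilon_{v_j}+(k-\tau)\overline{\vartheta}_j(\tau).
\]
Thus $\mathcal{F}(k|t)$ equals $\Omega$ intersected with the propagated slabs indexed, for each output $j$, by the set $\mathcal{T}_j(k|t)$ of measurement times whose rows survive the removal rule of Algorithm \ref{A: Recusive_identification} along the propagation \eqref{Eq:predicted_matrices1}--\eqref{Eq:predicted_matrices2} from $t$ to $k$. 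I would first verify this characterization by unrolling the recursion, checking that each propagation step merely augments the accumulated drift of every surviving slab by one increment of its own $\underline{\vartheta}_j,\overline{\vartheta}_j$ and discards the two oldest measurement rows once the hypothetical dimension $r_j(k|t)$ exceeds $M'$.

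The crucial structural observation is that the propagated slab associated with a measurement time $\tau$ and a future time $k$ depends only on the fixed data $(\varphi(\tau),\tilde{y}(\tau))$, the fixed drift bounds $\underline{\vartheta}_j(\tau),\overline{\vartheta}_j(\tau)$, and the elapsed count $(k-\tau)$ --- never on whether the prediction was launched at $t$ or at $t+1$. Hence, for any measurement time $\tau$ common to both predictions, the corresponding slab is \emph{identical} in $\mathcal{F}(k|t)$ and $\mathcal{F}(k|t+1)$. Since both sets also share the factor $\Omega$, the desired inclusion $\mathcal{F}(k|t+1)\subseteq\mathcal{F}(k|t)$ follows as soon as one shows the index-set nesting $\mathcal{T}_j(k|t)\subseteq\mathcal{T}_j(k|t+1)$ for every output $j$ and every $k\in[t+2,t+N]$: intersecting over a larger family of identical slabs yields a smaller set.

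The heart of the argument is this bookkeeping of the removal rule. The actual update to $\mathcal{F}(t+1)$ discards the oldest measurement of $\mathcal{F}(t)$ and appends the one taken at $t+1$, so the two predictions start from index sets shifted by one. Propagating each forward and counting the removals triggered by the threshold $r_j(\cdot)>M'$, I would show that at any common future time $k=t+j$ with $j\ge 2$, the prediction launched at $t$ has discarded exactly one more of its oldest slabs than the prediction launched at $t+1$; this extra removal makes the two retained index sets share the same smallest element, while the later prediction additionally keeps the slab from time $t+1$. In the budget-saturated regime this reads $\mathcal{T}_j(k|t)=\{k-\tfrac{M}{2},\dots,t\}\subseteq\{k-\tfrac{M}{2},\dots,t+1\}=\mathcal{T}_j(k|t+1)$, giving the inclusion. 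I would treat the terminal index $k=t+N$ separately, where $\mathcal{F}(t+N|t)=\Omega$ by definition so $\mathcal{T}_j(t+N|t)=\emptyset$ and the inclusion is immediate, and likewise dispose of the early regime in which $\mathcal{F}(t)$ has not yet reached the budget $M$ and no removal occurs, where the later prediction simply retains all of the earlier one's slabs plus the new one.

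The step I expect to be the main obstacle is precisely this removal-counting, and in particular pinning down why the inclusion first becomes valid at $k=t+2$ rather than $k=t+1$: at $k=t+1$ the prediction launched at $t$ still carries the oldest slab of $\mathcal{F}(t)$ that $\mathcal{F}(t+1)$ has already dropped, so neither set contains the other, whereas one further propagation step forces that slab to be discarded and restores the nesting. A secondary technical point requiring care is that the drift indices appearing in \eqref{Eq:predicted_matrices2} are written relative to the launch time through the fixed count $\tfrac{r_j(t)-r_{j0}}{2}$; I must confirm that, for matched slabs, these indices reduce to the same measurement time in both predictions, which holds because in the regime where removals occur both predictions are saturated at the budget and therefore carry the same number of slabs.
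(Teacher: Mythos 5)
Your proposal is correct and follows essentially the same route as the paper's proof: both arguments show that, for $k\in[t+2,t+N-1]$, the set $\mathcal{F}(k|t+1)$ is cut out by exactly the same (identically propagated) slab inequalities as $\mathcal{F}(k|t)$ plus the two new inequalities generated by the measurement at $t+1$, and both handle the terminal index $k=t+N$ separately via $\mathcal{F}(t+N|t)=\Omega$ and the fact that every predicted set retains the $\Omega$-rows. The only difference is one of detail: the paper asserts the constraint nesting ``by construction'' in a single display, whereas you make the removal-counting bookkeeping (and the reason the inclusion starts at $k=t+2$ rather than $k=t+1$) explicit, which is a more careful account of the same argument.
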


\begin{proof}
See the Appendix. $\hfill\square$
\end{proof}

We now state the main result related to recursive feasibility of the finite horizon optimal control problem and robust constraint satisfaction.

\begin{theorem}\label{L:reqursive_feasibility_tv}
Let Assumptions \ref{A:disturbance}-\ref{A:system_class} hold, and assume that the problem \eqref{Eq:FHOCP}, solved under the proposed adaptive control scheme that uses the recursive set membership identification Algorithm \ref{A: Recusive_identification}, is feasible at time step $t=0$. Then the problem  \eqref{Eq:FHOCP} is recursively feasible and the closed-loop system obtained by applying the proposed adaptive algorithm is guaranteed to satisfy input and output constraints $\forall t \geq 0$.
\end{theorem}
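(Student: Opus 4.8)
The plan is to establish recursive feasibility by the standard MPC argument: assume the FHOCP \eqref{Eq:FHOCP} is feasible at time $t$ with optimal solution $U^*(t) = [u^*(t|t)^T,\ldots,u^*(t+N-1|t)^T]^T$ and associated multipliers $\Lambda^*(t)$, and then explicitly construct a feasible (not necessarily optimal) candidate solution at time $t+1$. The natural candidate is the shifted sequence: set $u(k|t+1) = u^*(k|t)$ for $k \in [t+1, t+N-1]$, and for the new terminal input $u(t+N|t+1)$ choose the steady-state input consistent with the terminal regressor. Since the terminal equality constraint \eqref{Eq:end_constraint} forces $\varphi(t+N|t)$ to be a steady state of the model $\varphi = F\varphi + Gu$, I would take $u(t+N|t+1)$ to be precisely the input that holds the system at that steady state, so that $\varphi(t+N+1|t+1) = \varphi(t+N|t)$ remains a steady state and the new terminal constraint is satisfied by construction.

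The key steps, in order, are as follows. First I would verify that the shifted input sequence satisfies the input and input-rate constraints \eqref{Eq:input_rate}: the shifted moves inherit feasibility directly from $U^*(t)$, and the appended terminal move requires checking that the steady-state input lies in the (compact) input set and that the corresponding $\Delta u$ lies in the input-rate set, which holds because that set contains the origin and, for a steady state, the increment can be taken to be zero. Second, and most importantly, I would show that the shifted multipliers yield a feasible point for the robust output constraints \eqref{Eq:relaxed_constraints1} at time $t+1$. This is exactly where Lemma \ref{L:predicted_FPS} is essential: because $\mathcal{F}(k|t+1) \subseteq \mathcal{F}(k|t)$ for $k \in [t+2, t+N]$, the predicted feasible parameter sets at time $t+1$ are \emph{contained} in those used at time $t$, so any robust output constraint that held over the larger set $\mathcal{F}(k|t)$ continues to hold over the smaller set $\mathcal{F}(k|t+1)$. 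Third, for the newly added prediction step $k = t+N+1$ (the terminal step at $t+1$), the relevant predicted set is $\mathcal{F}(t+N+1|t+1) = \Omega$ by the terminal-set choice in Remark \ref{R:terminal_set}, and since the terminal regressor is a steady state, the output $C_y H \varphi(t+N+1|t+1) = C_y H \varphi(t+N|t)$ was already constrained to satisfy \eqref{Eq:output} robustly over $\Omega$ at the previous step; thus feasibility is preserved there as well.

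The main obstacle will be step two: translating the set-inclusion of Lemma \ref{L:predicted_FPS} into the \emph{dual} feasibility certificate required by \eqref{Eq:relaxed_constraints1}. Because the constraints are expressed via the LP-duality reformulation with multipliers $\lambda_l(k|t)$, I cannot simply invoke set inclusion; I must produce explicit multipliers $\lambda_l(k|t+1)$ that satisfy $A(k|t+1)^T\lambda_l = [c_{l1}\varphi,\ldots,c_{ln_y}\varphi]^T$ and $b(k|t+1)^T\lambda_l \le o_l - \overline{d}_l$ with $\lambda_l \ge \mathbf{0}$. The cleanest route is to argue at the level of the primal robust constraint \eqref{Eq:output} rather than the dual: by Lemma \ref{L:constraint_equivalence_1}, feasibility of \eqref{Eq:relaxed_constraints1} is \emph{equivalent} to satisfaction of \eqref{Eq:output}, so it suffices to verify that the shifted input-regressor trajectory satisfies \eqref{Eq:output} over the sets $\mathcal{F}(k|t+1)$; the existence of suitable multipliers then follows automatically from the lemma. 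This reduces everything to the set-inclusion argument of Lemma \ref{L:predicted_FPS} plus the steady-state terminal argument. Finally, once recursive feasibility is established, robust constraint satisfaction in closed loop is immediate: at each $t$ the applied input $u^*(t|t)$ satisfies \eqref{Eq:input_rate}, and since $H(t) \in \mathcal{F}(t)$ by Lemma \ref{L:nonempty_membership}, the realized output satisfies the output constraints because \eqref{Eq:output} enforces them for \emph{all} $H \in \mathcal{F}(t)$ including the true one, accounting for the disturbance via $\overline{d}$.
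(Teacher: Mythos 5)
Your proposal is correct and follows essentially the same argument as the paper: the shifted candidate sequence (your ``steady-state input'' is exactly the repeated last input $u^*(t+N-1|t)$, by the terminal constraint \eqref{Eq:end_constraint}), the set inclusions of Lemma \ref{L:predicted_FPS} together with $\mathcal{F}(t+N+1|t+1)=\mathcal{F}(t+N|t)=\Omega$, the equivalence of Lemma \ref{L:constraint_equivalence_1} to pass from the primal robust constraints \eqref{Eq:output} to feasibility of the dual reformulation \eqref{Eq:relaxed_constraints1}, and Lemma \ref{L:nonempty_membership} for closed-loop constraint satisfaction. Your treatment is in fact slightly more explicit than the paper's in verifying the input-rate constraint for the appended move and in flagging that the dual certificate must be recovered via the if-and-only-if of Lemma \ref{L:constraint_equivalence_1} rather than by shifting multipliers.
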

\begin{proof}
See the Appendix. $\hfill\square$
\end{proof}
\begin{remark}\label{R:temporary_better2}
The two key components that allow us to guarantee recursive feasibility are Assumption \ref{A:rate_of_change} (known bounds on the parameters rate of change) and the robustification of the output constraints at the end of the prediction horizon with respect to the whole set $\Omega$. The theoretical guarantees are therefore achieved by increasing the conservativeness of the overall adaptive MPC algorithm. 
Such conservativeness can be mitigated by increasing the prediction horizon $N$. In this way, the presence of the terminal constraint does not have a large impact on the control performance at the beginning of the prediction horizon. Due to the receding horizon strategy in which the feasible parameter set is updated at each time step, the resulting control performance of the proposed adaptive scheme also remains unaffected by this conservativeness. In fact, the presence of the terminal constraint can be seen as a way for the controller to ensure that it can satisfy the constraints for all possible future changes of the plant parameters and if the plant parameters can not change a lot from one time step to the other and if the prediction horizon is long enough, the effects of the terminal constraint on the controller performance are not significant. 
\end{remark}


The proposed adaptive control algorithm requires the solution of $2n_y+1$ LPs that can be parallelized, and of a single QP at each time step. These convex optimization problems can be solved very efficiently with available software tools. Moreover, since the Algorithm \ref{A: Recusive_identification} for the recursive updating of the set $\mathcal{F}(t)$ uses bounded complexity updating strategy, all the matrices and vectors used for describing the set $\mathcal{F}(t)$ are guaranteed to have bounded dimensions, and hence the size of the LPs and the QP that have to be solved at each time step is limited. All of these properties make the proposed adaptive control algorithm computationally tractable and suitable for on-line implementation.

\section{Simulation study}\label{S:Example}
We tested the proposed adaptive control algorithm in simulation on a three tank system. This system consists of three water tanks that are mutually connected in series with narrow pipes that are attached to the tanks at their bottom and whose cross section can be controlled by valves. Water can be directly pumped from a water reservoir into the two outer tanks, but not into the tank in the middle. One of the outer tanks has a small opening at the bottom through which the water is allowed to leak out into the water reservoir. We assume that all three tanks have the same cross section that we denote by $S$. In addition, we assume that the cross sections of the connections between the tanks and the water outlet has the area given by $\gamma_i S_c, i =1,2,3$, where    $S_c$ is a constant term and $\gamma_i$ are the time-varying parameters whose values are defined by the positions of the corresponding valves. We further denote the water levels in the three tanks by $h_i, i =1,2,3$ and the input water flows into the tanks 1 and 3 by $q_1$ and $q_2$. Fig. \ref{Fig: tanks} shows the physical organization of the described three tank system. 

\begin{figure}[!hbt]
	\centerline{
		\includegraphics[width=\columnwidth]{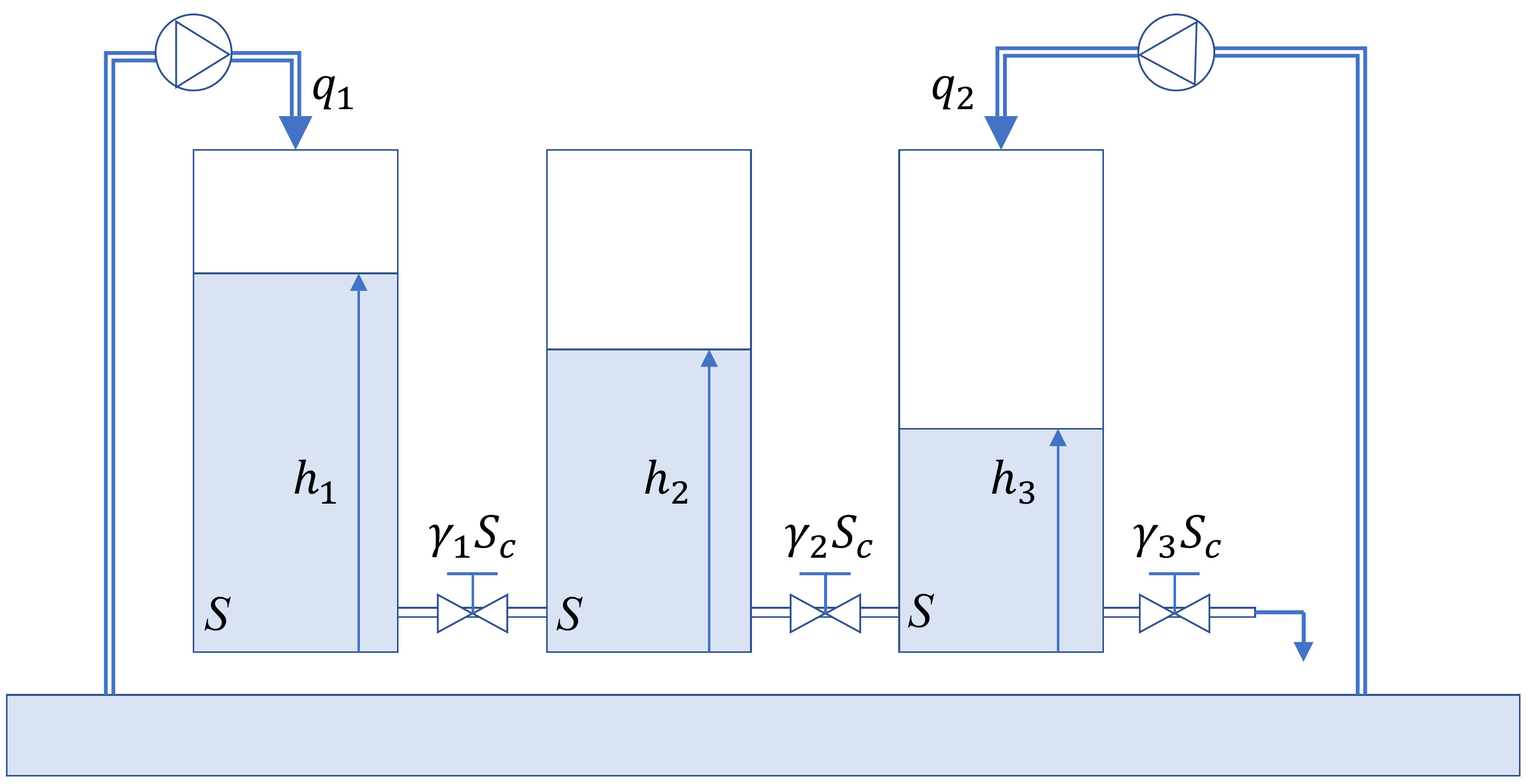}}
	\caption{The three tank system}
	\label{Fig: tanks}
\end{figure}

If we denote the Earth's gravity acceleration constant by $g$, then the dynamic equations that describe the evolution of the water levels in the three tanks are:
\beq\label{eq:tanks}
\begin{array}{ccl}
\frac{dh_1}{dt}&\!\! = \!\!& \frac{q_1 -\gamma_1S_c sgn(h_1\! -\! h_2)\sqrt{2g(h_1\! -\! h_2)}}{S}\\
\frac{dh_2}{dt}&\!\! =\!\! & \frac{\gamma_1S_c sgn(h_1\! -\! h_2)\sqrt{2g(h_1 \!- \!h_2)} -\gamma_2S_c sgn(h_2\! -\! h_3)\sqrt{2g(h_2\! -\! h_2)}}{S}\\
\frac{dh_3}{dt}&\!\! = \!\!& \frac{q_2 -\gamma_2S_c sgn(h_2\! -\! h_3)\sqrt{2g(h_2\! -\! h_2)}-\gamma_3S_c\sqrt{2gh_3}}{S}
\end{array}
\eeq  

In simulations, we modify the values of the parameters $\gamma_1$ and $\gamma_3$ over time as shown in Fig. \ref{Fig: parameters}. Numerical values for all other three tank model parameters are listed in Table \ref{T: tank_parameters}.

\begin{table}[ht]
\caption{Numerical values of the three tank model parameters.} 
\centering  
\begin{tabular}{ccc} 
\hline\hline                        
$S~[cm^2]$& $S_c~[cm^2]$& $\gamma_2$ \\ [0.5ex] 
\hline                  
375 & 3.42 & 0.5 \\ 
\hline 
\end{tabular}
\label{T: tank_parameters} 
\end{table}

\begin{figure}[!hbt]
	\centerline{
		\includegraphics[width=\columnwidth]{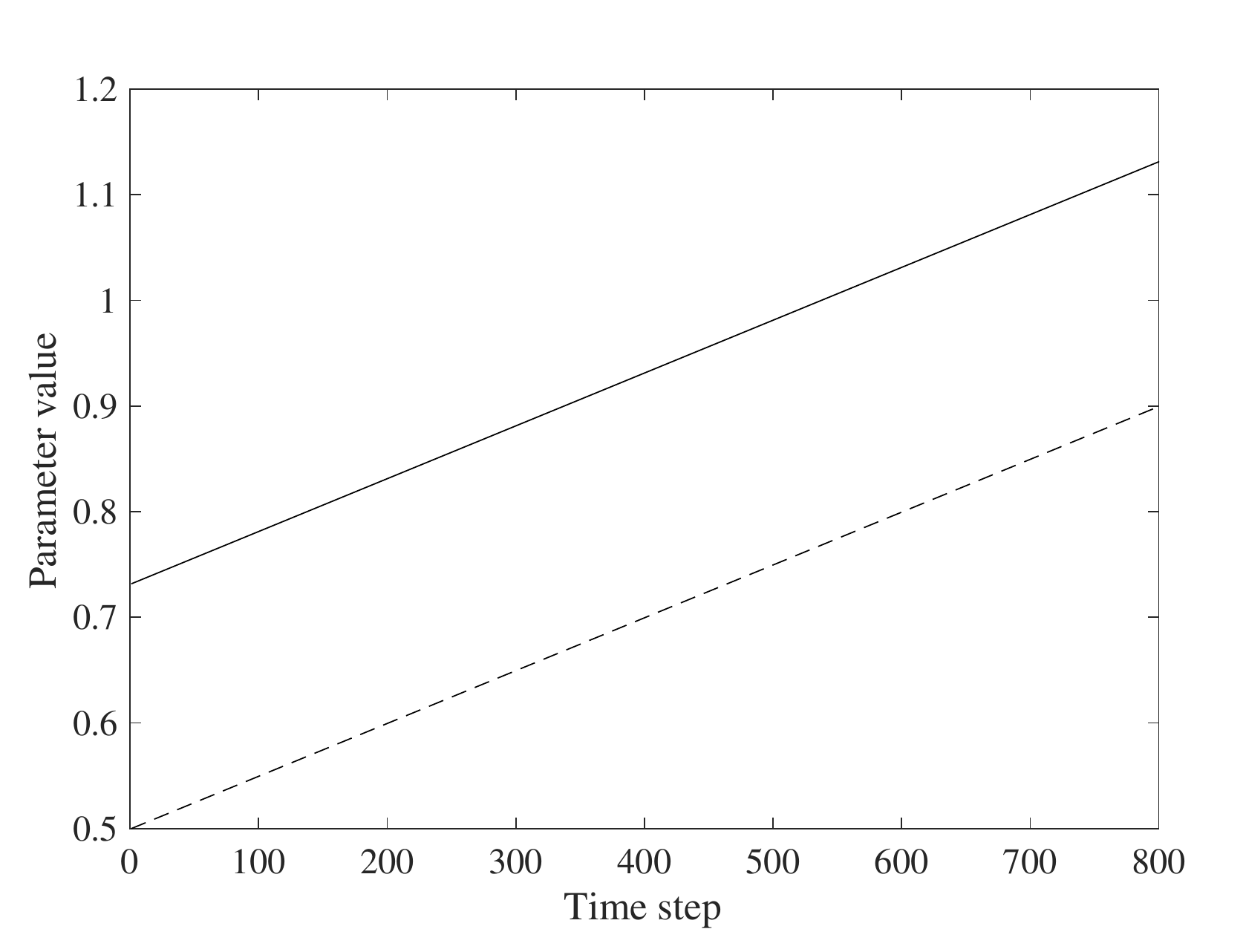}}
	\caption{Time variation of the parameters $\gamma_1$ (dashed line) and $\gamma_3$ (solid line)}
	\label{Fig: parameters}
\end{figure}

We regulate the tank water levels around a steady state that is defined by the water levels $h_1 = 8\,$cm, $h_2 = 7\,$cm and $h_3 = 6\,$cm. Therefore, the simulations are done with the linearization of the system \eqref{eq:tanks} around these steady state values, where the plant outputs are the differences of the tank water levels and the steady state levels and the control inputs are the differences of the two water flows with respect to the steady state water flows. System is regulated with a sampling time of $0.16~s$. 

The described system has $2$ inputs and $3$ outputs (i.e. $n_u = 2$ and $n_y = 3$). We consider a finite impulse response model that uses $12$ coefficients to describe the influence of each input to each output (i.e. $m = 24$). The control objective is to regulate the system such that the water level in  tank 2 (i.e. $h_2$) follows a given reference profile and satisfy the input and output constraints. The constraints are selected such that the rate and amplitude of both control inputs are limited, that the water level of the first tank stays below $12\,$cm, that the level of the second tank remains below the level of the first tank and the level of the third tank remains below the level of the second tank and finally that the level of the third tank remains above $0\,$cm. These input and output constraints yeald the following values for the matrices and vectors in \eqref{Eq:IO_constraints}:
\[
\begin{aligned}
C_u = C_{\Delta u} &= \left[\ba{cc} 1 & 0\\0 & 1\\-1 & 0\\ 0 & -1 \ea  \right], &g_u=&\left[\ba{c}9\\9\\9\\9  \ea \right], g_{\Delta u}= \left[\ba{c}4\\4\\4\\4  \ea \right]\\
C_y&=\left[\ba{ccc} 1&0&0\\0&0&-1\\-1&1&0\\0&-1&1 \ea \right], &p=& \left[\ba{c}5\\6\\1\\1  \ea \right].
\end{aligned}
\]

The initial feasible parameter set $\mathcal{F}_0$ and the set of constrains on the model parameter's rate of change $\mathcal{D}$ (see \eqref{eq:change_bounds}) have been defined by choosing identical box constraints on the impulse response coefficients for each input-output pair. The physics of the considered plant defines the lower bound on each of the impulse response coefficients to be zero. The upper bounds on the impulse response coefficients are defined by using an exponentially decaying curve that over bounds the impulse response coefficients at each time step. The bounds on the rate of change of the impulse response coefficients is also defined by exponentially decreasing bounds, with the difference that a constant bound is assumed for the couple of first coefficients. The bounds on the impulse response coefficient magnitude and rate of change are sown in Fig. \ref{Fig: bounds}. Numerical values of other tuning parameters of the proposed adaptive MPC controller are listed in Table \ref{T: controller_parameters}. In simulations, additive noise uniformly distributed in the range defined by the bounds in Table \ref{T: controller_parameters} was used.

\begin{figure}[!hbt]
	\centerline{
		\includegraphics[width=11 cm]{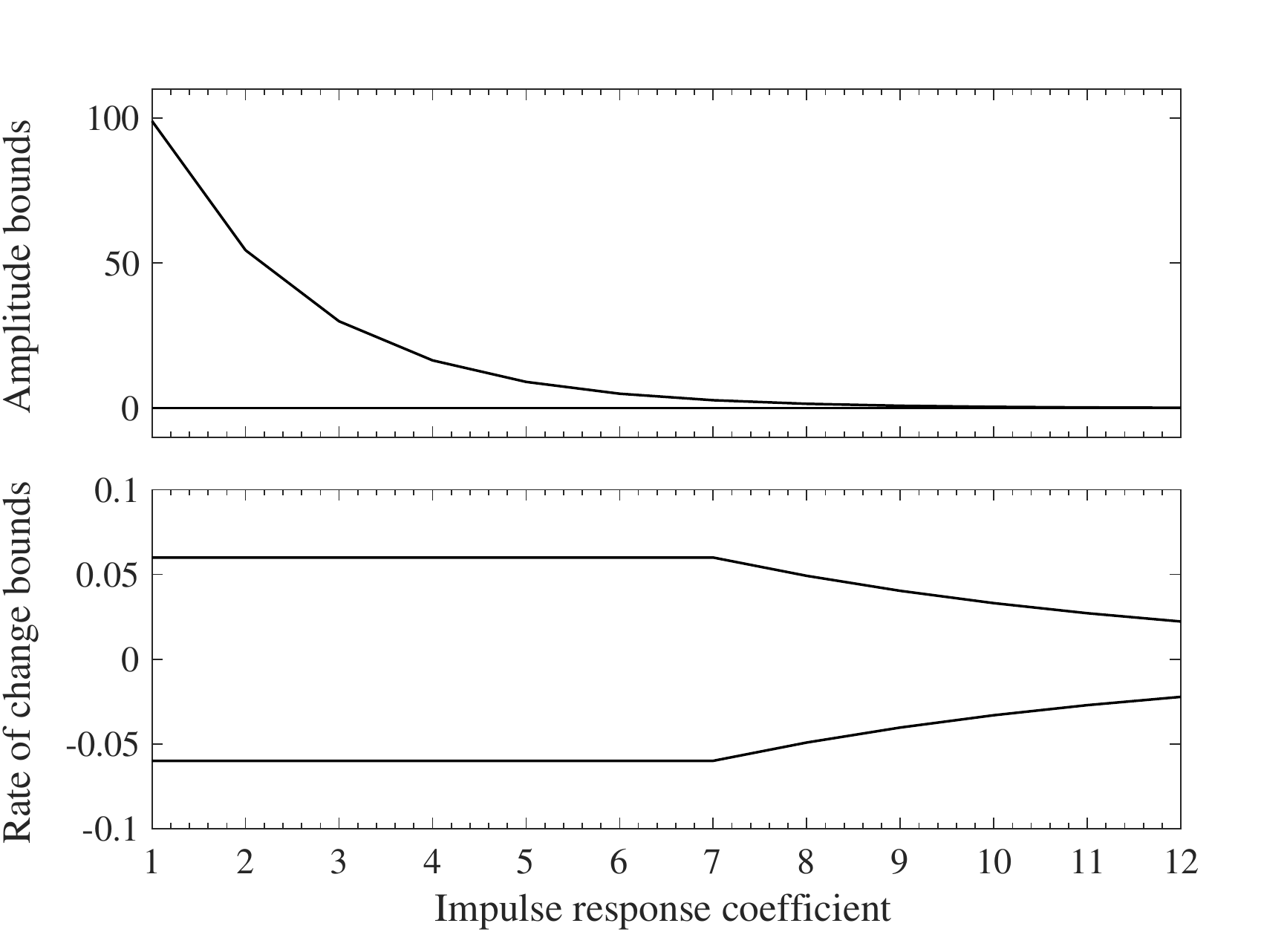}}
	\caption{Bounds on the impulse response coefficient amplitudes (upper plot) and rates of change (lower plot)}\label{Fig: bounds}
\end{figure}

\begin{table}[ht]
	\caption{Numerical values of the controller tuning parameters.} 
	\centering  
	\begin{adjustbox}{max width=1\columnwidth}
	\begin{tabular}{ccccccc} 

		\hline\hline                        
		$\epsilon_d$ & $\epsilon_v$ & $Q$& $R$& $S$ & $N$ & $M$ \\ [0.5ex] 
		\hline                  

		$\left[\ba{c}0.1\\0.1\\0.1\ea \right]$&$\left[\ba{c}0.1\\0.1\\0.1\ea \right]$& $\left[\ba{ccc} 0 & 0 & 0\\0 & 1 & 0\\0 & 0 & 0\ea  \right]$ & $\left[\ba{cc} 0.5 & 0\\0 & 0.5\ea\right]$ & $\left[\ba{cc} 0 & 0\\0 & 0\ea\right]$ & $22$ & $100$ \\ 
	
		\hline 

\end{tabular}
	\label{T: controller_parameters} 
	\end{adjustbox}
\end{table}

Resulting tank water levels and control inputs are shown in Fig. \ref{Fig: results} and Fig. \ref{Fig: inputs} respectively. In addition to the resulting plant outputs, Fig. \ref{Fig: results} also shows the upper and the lower bounds for each of the three outputs with respect to the feasible parameter set at each time step. As can be seen, the output constraints are maintained for the whole range of uncertainty, which results in robust constraint satisfaction.

\begin{figure}[!htb]
	\centerline{
		\includegraphics[width=11 cm]{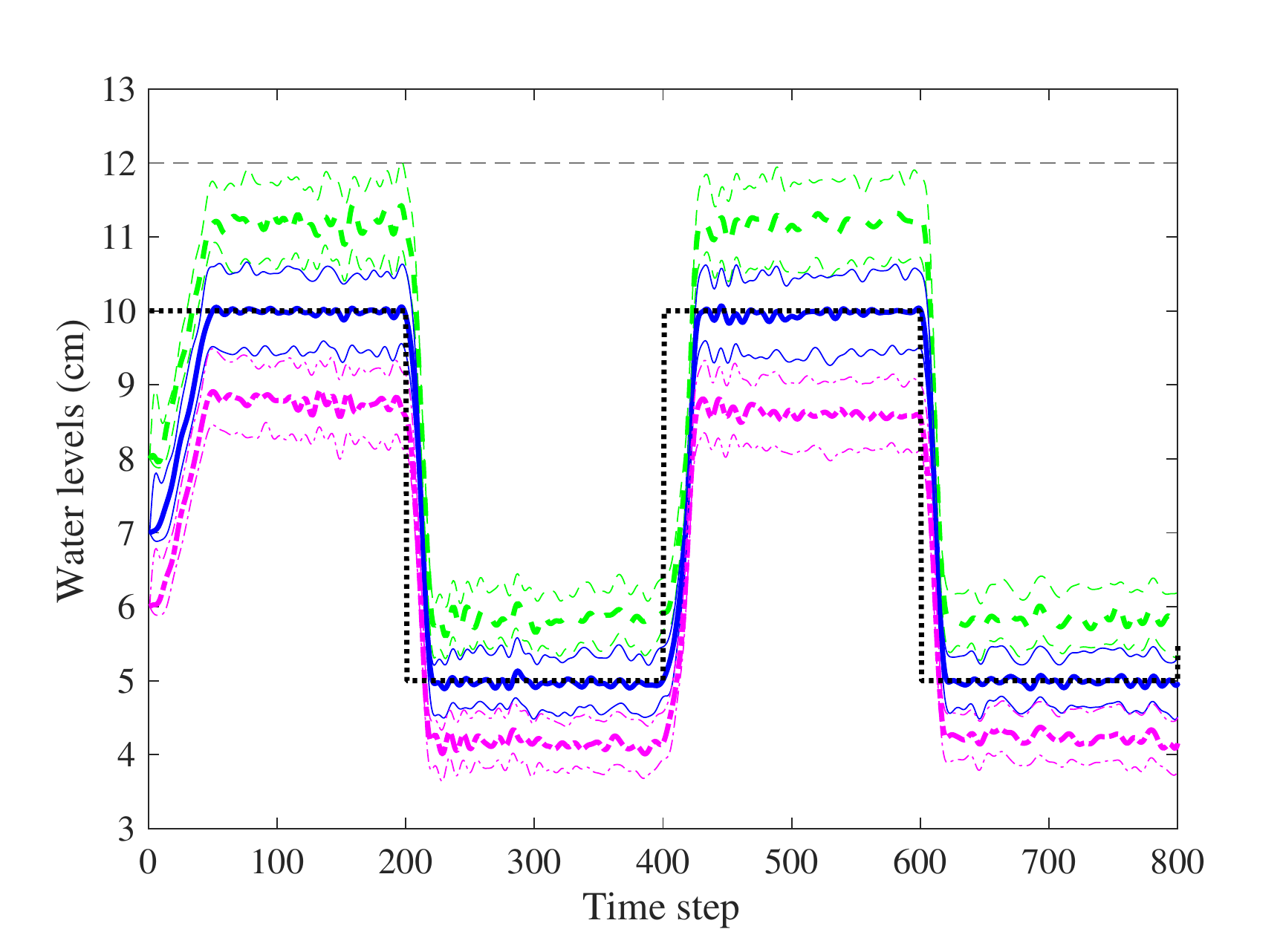}}
	\caption{Resulting tank water levels obtained when the proposed adaptive MPC algorithm is used (thick lines) for the first (green dashed), second (blue solid) and the third (magenta dash-dot) tank, compared with the reference for the water level in the second tank (thick dotted line). In addition to the simulated tank water levels, the uncertainty intervals calculated based on the feasible parameter set are also shown (thin lines), as well as the constraint of 12~cm (black dashed line). }
	\label{Fig: results}
\end{figure}

\begin{figure}[!hbt]
	\centerline{
		\includegraphics[width=11 cm]{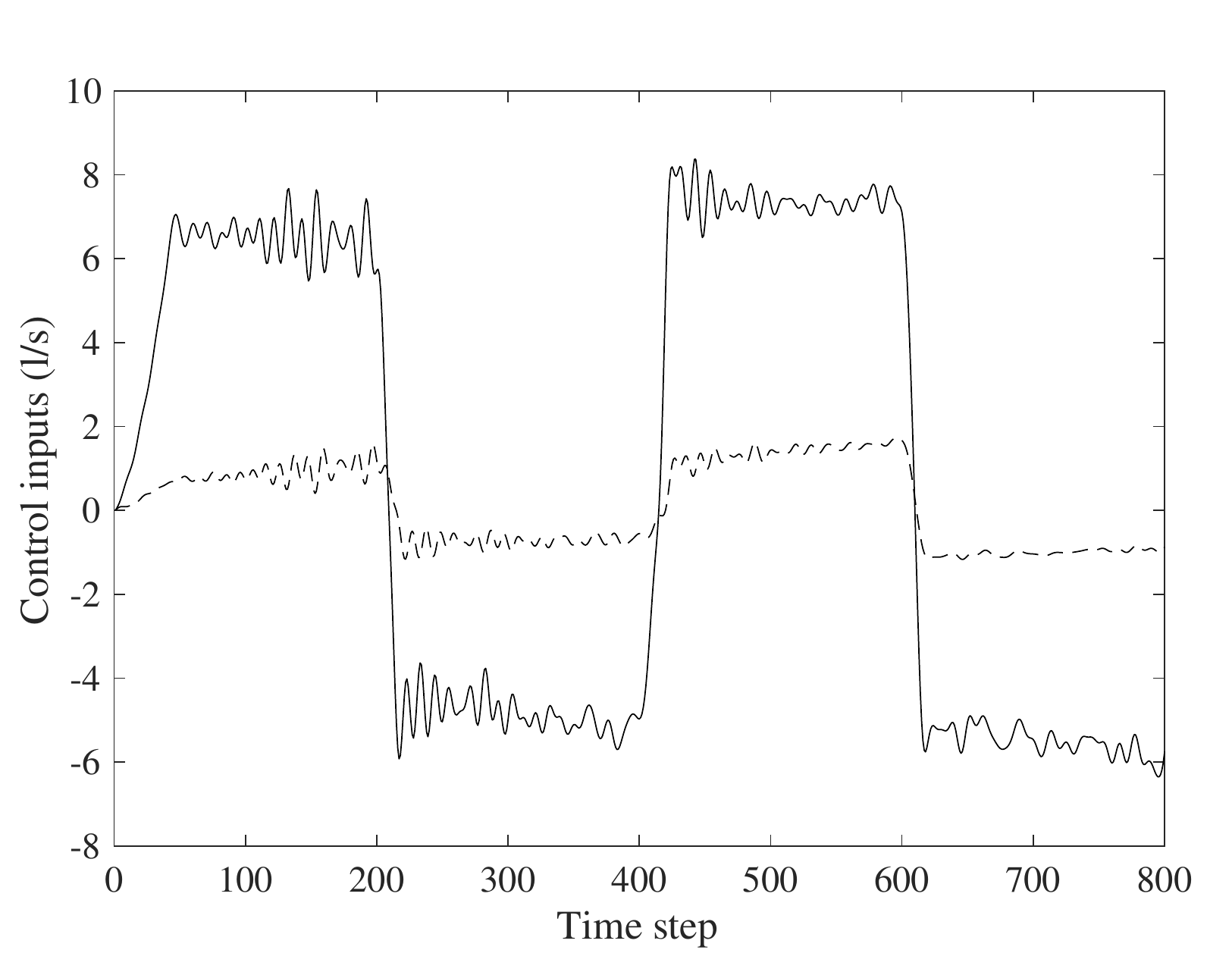}}
	\caption{Control inputs $u_1$ (dashed line) and $u_2$ (solid) obtained in the simulation.}
	\label{Fig: inputs}
\end{figure}

To illustrate the effectiveness of the proposed adaptive control scheme, we compared its performance with the performance of the identical MPC controller that uses least squares with forgetting. For the simulations a forgetting factor of $0.9$ was used. Both controllers used the same initial guess for the plant parameters, and the controller that uses least squares implements a soft enforcement of the output constraints, as there are no recursive feasibility guarantees in this case. The tank water levels obtained with this controller are shown in Fig. \ref{Fig: Least_squares1}. As can be seen, the use of this controller results in output constraint violation, which is shown in greater detail in Fig. \ref{Fig: Least_squares2}. The adaptive controller with least squares is much less conservative as it does not take the uncertainty into account. On the other hand, although more cautious, the proposed adaptive MPC algorithm for time varying systems is capable of satisfying output constraints and guarantees recursive feasibility. 

\begin{figure}[!hbt]
	\centerline{
		\includegraphics[width=11 cm]{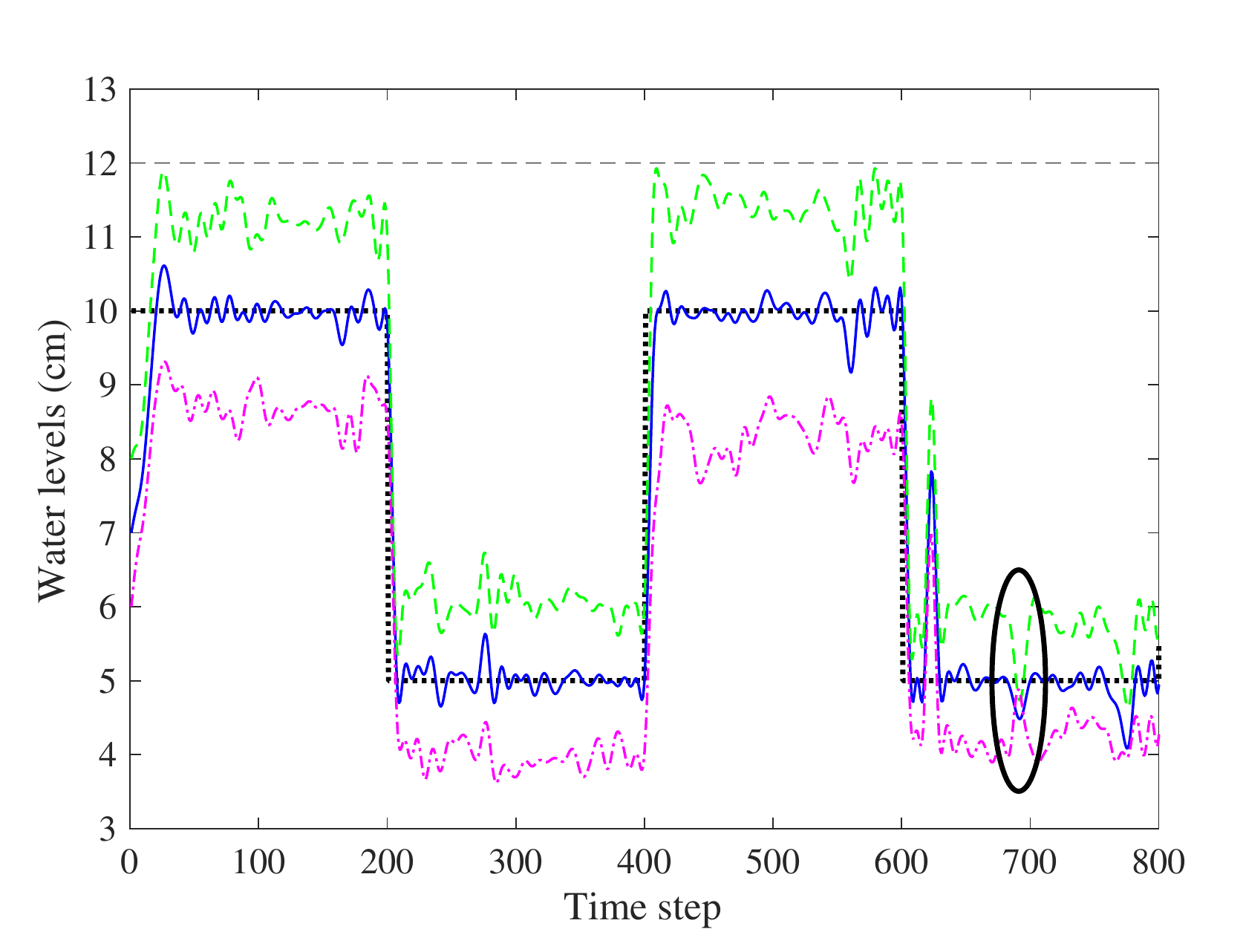}}
	\caption{Tank water levels obtained when the adaptive MPC algorithm based on recursive least squares with forgetting is used for the first (green dashed line), second (blue solid) and  third (magenta dash-dot) tank, compared with the reference signal for the water level in the second tank (thick dotted line). An example of output constraint violation is marked with a black ellipsoid (enlarged in Fig. \ref{Fig: Least_squares2}).}
	\label{Fig: Least_squares1}
\end{figure}

\begin{figure}[!hbt]
	\centerline{
		\includegraphics[width=11 cm]{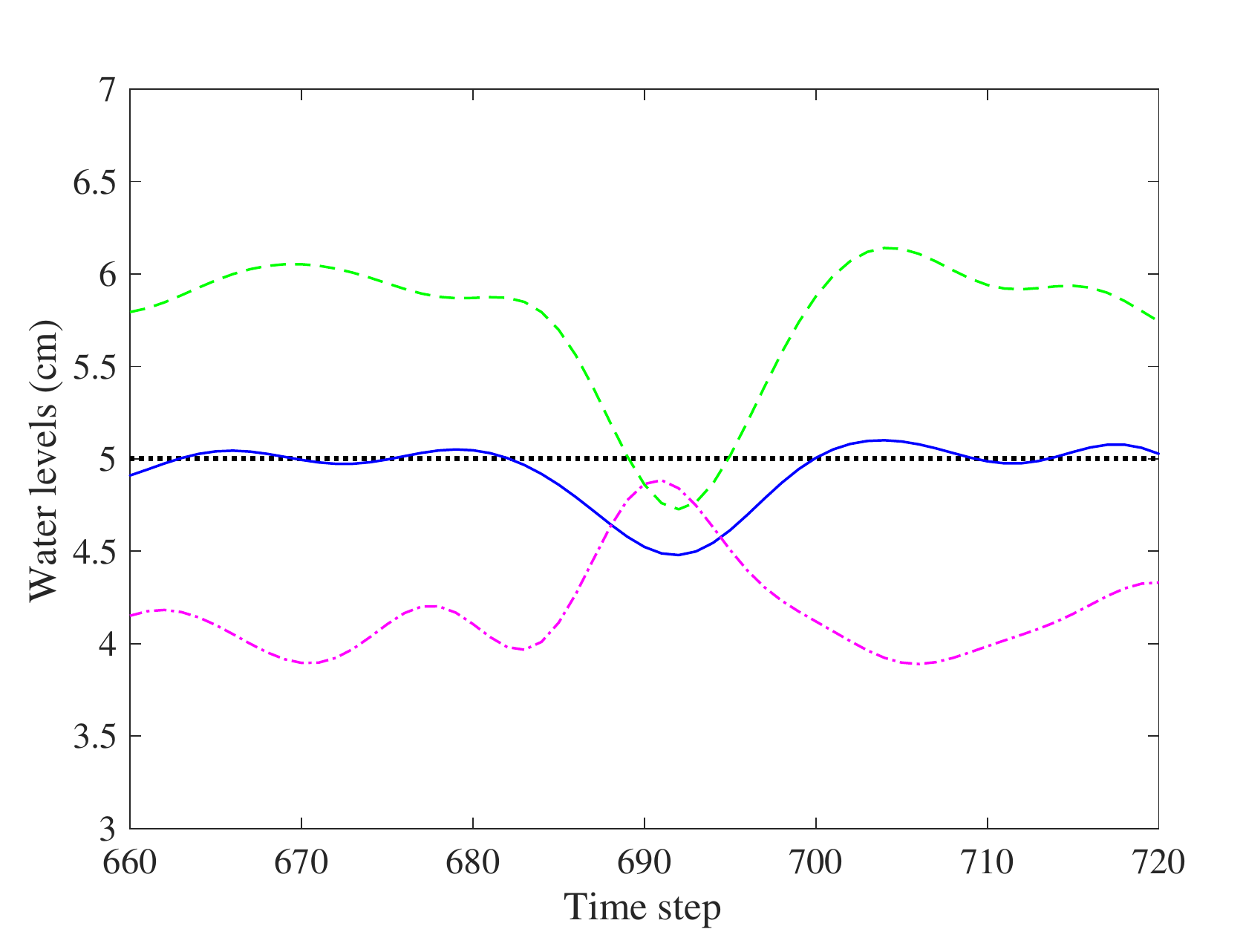}}
	\caption{Zoom-in of the constraint violation that occurs when the adaptive MPC based on recursive least squares with forgetting is used, corresponding to the circled area in Fig. \ref{Fig: Least_squares1}. According to the output constraints, the three water levels should never cross.}
	\label{Fig: Least_squares2}
\end{figure}


\section*{Appendix}\small
\emph{Proof of Lemma \ref{L:nonempty_membership}}. We use induction to prove the clain of the Lemma. At time step $t=0$, from the step 1) of Algorithm \ref{A: Recusive_identification}, it holds that $\mathcal{F}(0)=\Omega$ and from Assumption \ref{A:system_class}, it then follows that $\mathcal{F}(0)\neq \emptyset$ and that $H(0)\in \mathcal{F}(0)$. Let us now, for the sake of the inductive argument, assume that at some time step $t\geq 0$, it holds that $H(t)\in \mathcal{F}(t)$. We shall show, that it than follows that $H(t+1)\in \mathcal{F}(t+1)$. To this end, we define matrices $A_j'(t)\in \bR^{r_{j0}\times m},\, A''_j(t)\in \bR^{(r_j(t)-r_{j0})\times m}$ and vectors $b_j'(t)\in \bR^{r_{j0}},\, b''_j(t)\in \bR^{r_j(t)-r_{j0}}, j=1,\hdots, n_y$, as:
\[
b'(t)=\left[\ba{c}b_{j1}\\\vdots \\ b_{jr_{j0}} \ea \right],\, b''(t) = \left[\ba{c}b_{jr_{j0}+3}\\\vdots\\b_{jr_j(t)}\ea \right],\, A'_j(t)\!=\!\left[\! \ba{c}a_{j1}\\\vdots \\ a_{jr_{j0}} \ea\!\right],\,
\]
\[
A''_j(t)\!=\!\left[\!\ba{c}a_{jr_{j0}+3}\\\vdots\\a_{jr_j(t)}\ea \!\right]\!=\!\left[\!\ba{c} -\varphi^T\left(t\!-\!\frac{r_j(t)\!-\!r_{j0}}{2}\right)\\\varphi^T\left(t\!-\!\frac{r_j(t)\!-\!r_{j0}}{2}\right)\\ \vdots \\-\varphi^T(t)\\\varphi^T(t)\ea \!\right].
\]
Note that $A_j(t)=\left[\ba{c}A'_j(t)\\A''_j(t)\ea\right]$ and $b_j(t)=\left[\ba{c}b'_j(t)\\b''_j(t)\ea\right]$. From Assumption \ref{A:system_class}, it holds that:
\beq\label{eq:P1_ineq1}
A'_j(t)H_j(t+1)\leq b'_j(t), j=1,\hdots, n_y.
\eeq
In addition, we note that from the inductive assumptions, it holds that $A''_j(t)H_j(t)\leq b''_j(t), j=1,\hdots, n_y$. Therefore, it than also holds that:
\[
A''_j(t)H_j(t+1)\leq b_j''(t)+e_j(t)
\]
where $e_j(t)\in \bR^{r_j(t)-r_{j0}}$, $e_j(t)=A''_j(t)\left(H_j(t+1)-H_j(t) \right)$, $j=1,\hdots, n_y$. From the definition of $A''_j(t)$ (note that this matrix is exclusively formed from the past regressor vectors), and the definition of   $\overline{\vartheta}(t)$ and $\underline{\vartheta}(t)$ in \eqref{eq: time_varying_bounds}, we note that the vectors $e_j(t), j=1,\hdots, n_y$ are bounded such that it holds:
\[
e_j(t)\leq \overline{e}_j = \left[\ba{c}   -\underline{\vartheta}\left(t-\frac{r_j(t)-r_{j0}}{2}\right)\\ \overline{\vartheta}\left(t-\frac{r_j(t)-r_{j0}}{2}\right) \\ \vdots \\  -\underline{\vartheta}(t)\\ \overline{\vartheta}(t)   \ea\right].
\]  
Therefore, it holds that: 
\beq\label{eq:P1_ineq2}
A''_j(t) H_j(t+1)\leq b_j''(t+1)+\overline{e}_j, j=1,\hdots, n_y.
\eeq
Moreover, from Assumption \ref{A:disturbance}, it follows that the following two inequalities have to be satisfied:
\beq\label{eq:P1_ineq3}
\ba{ll}
-\varphi(t\!+\!1)H_j(t\!+\!1)&\!\leq\! -\tilde{y}_j(t\!+\!1)\!+\!\epsilon_{d_j}\!+\!\epsilon_{v_j} \\
-\varphi(t\!+\!1)H_j(t\!+\!1)&\! \leq \!\tilde{y}_j(t\!+\!1)\!+\!\epsilon_{d_j}\!+\!\epsilon_{v_j}
\ea,  j\!=\!1,\hdots, n_y.
\eeq
Based on \eqref{eq:P1_ineq1}, \eqref{eq:P1_ineq2} and \eqref{eq:P1_ineq3}, it holds that:
\[
A_j^\dag(t+1)H_j(t+1)\leq b_j^\dag(t+1),\, j=1,\hdots, n_y,
\] 
where
\[
A_j^\dag(t\!+\!1)\!=\!\left[\ba{c}A'_j(t)\\A''_j(t)\\-\varphi(t\!+\!1)\\\varphi(t\!+\!1) \ea\right],
\]
and
\[
\, b_j^\dag(t+1)\!=\!\left[\ba{c}b'_j(t)\\b''_j(t\!+\!1)+\overline{e}_j\\-\tilde{y}_j(t\!+\!1)+\epsilon_{d_j}\!+\!\epsilon_{v_j}\\\tilde{y}_j(t\!+\!1)\!+\!\epsilon_{d_j}\!+\!\epsilon_{v_j} \ea\right],
\]
are the matrices that would be obtained after running the step 4) of Algorithm \ref{A: Recusive_identification} at time $t+1$ (i.e. before removing any rows from the matrices and vectors in order to keep their dimensions bounded). Therefore, the set $\mathcal{F}^\dag(t+1)= \left\{ H \in \bR^{n_y \times m}:A_j^\dag(t+1)H_j\leq b_j^\dag(t+1)\right\}$ is a nonempty set that is guaranteed to contain $H(t+1)$, i.e. $H(t+1)\in \mathcal{F}^{\dag}(t+1)$. Set $\mathcal{F}^{\dag}(t+1)$ represents the updated feasible parameter set before possible removal of any inequalities in order to bound the complexity of its description. The set $\mathcal{F}(t+1)$ is obtained by either taking the set $\mathcal{F}^\dag(t+1)$ as it is (i.e. when $r_j(t)\leq M+r_{j0},\forall j=1,\hdots, n_y$), or by removing several inequalities that constitute it (see step 5) of Algorithm \ref{A: Recusive_identification}). Therefore it holds that $\mathcal{F}^\dag(t+1)\subseteq \mathcal{F}(t+1)$, and hence it holds that $H(t+1)\in \mathcal{F}(t+1)$, which means that $ \mathcal{F}(t+1)\neq \emptyset$. By invoking the argument of mathematical induction, it then holds that $H(t) \in \mathcal{F}(t),\forall t\geq 0$, which completes the proof.$\hfill\blacksquare$\\

\emph{Proof of Lemma \ref{L:predicted_FPS}}. We first note that, from the definition of $\mathcal{F}(t+1|t)$ (see \eqref{eq:predicted_FPS},\eqref{Eq:predicted_matrices1} and \eqref{Eq:predicted_matrices2}), and the way Algorithm \ref{A: Recusive_identification} works, it holds that:
\[	A_j(t\!+\!1)\! =\!\!\left[\!\!\ba{c}A_j(t\!+\!1|t)\\-\varphi(t+1)\\\varphi(t+1) \ea\!\!\right],  b_j(t\!+\!1)\!=\!\!\left[\!\!\ba{c}b_j(t\!+\!1|t)\\-\tilde{y}_j(t\!+\!1)+\epsilon_{d_j}\!+\!\epsilon_{v_j}\\\tilde{y}_j(t\!+\!1)\!+\!\epsilon_{d_j}\!+\!\epsilon_{v_j}  \ea\!\!\right]\!\!.
\]
Matrices $A_j(k|t+1)$ and vectors $b_j(k|t+1), j=1,\hdots, n_y$ are then, by construction, formed from the matrices $A_j(t+1)$ and $b_j(t+1)$. Therefore we have that, for $j=1,\hdots, n_y$ and $k\in[t+2,t+N]$, it holds:
\[
A_j(k|t+1) = \left[\ba{c}A_j(k|t)\\-\varphi(t+1)\\\varphi(t+1)\ea \right],
\]
and
\[
b_j(k|t+1) = \left[\ba{c}b_j(k|t)\\ -\tilde{y}_j(t+1)+\epsilon_{d_j}+\epsilon_{v_j}\\\tilde{y}_j(t+1)+\epsilon_{d_j}+\epsilon_{v_j}   \ea \right]
\]
As it holds that $\mathcal{F}(k|t) = \left\{ H \in \bR^{n_y \times m}:A_j(k|t)H_j\leq b_j(k|t)\right\}$, and $\mathcal{F}(k|t+1) =$ 

$ \left\{ H \in \bR^{n_y \times m}:A_j(k|t+1)H_j\leq b_j(k|t+1)\right\}$, $\forall k\in [t+2,t+N-1]$, it holds that each of the sets $\mathcal{F}(k|t+1), k\in[t+1,t+N-1]$ is formed by the same inequalities as the set $\mathcal{F}(k|t)$ and that it has two additional inequalities defined by the regressor vector and output measurement at time step $t+1$. Therefore, it holds that $\mathcal{F}(k|t+1)\subseteq \mathcal{F}(k|t), k\in [t+2,t+N-1]$. In addition, we note that $\mathcal{F}(t+N|t)=\Omega$ and that for $j=1,\hdots, n_y$, it holds that:
\[
A_j(t+N|t+1)=\left[\ba{c}A_{j0}\\A_j'  \ea \right], \, b_j(t+N|t+1)=\left[\ba{c}b_{j0}\\b_j'  \ea \right],
\]
where the matrices $A_j'$ and the vectors $b_j', j=1,\hdots, n_y$ are obtained by using the  rules for generating the predicted matrices $A_j(k|t)$ and vectors $b_j(k|t)$ in ,\eqref{Eq:predicted_matrices1} and \eqref{Eq:predicted_matrices2}. Therefore, from the definition of $\mathcal{F}(t+N|t+1)$ (see e.g. \eqref{eq:predicted_FPS}) and the definition of the set $\Omega$ in \eqref{Eq:system_set}, it holds that $\mathcal{F}(t+N|t+1)\subseteq \mathcal{F}(t+N|t)$. Hence, it holds that $\mathcal{F}(k|t+1)\subseteq \mathcal{F}(k|t), k\in [t+2,t+N]$, which completes the proof.
$\hfill\blacksquare$\\

\emph{Proof of Theorem \ref{L:reqursive_feasibility_tv}}. We first show that the FHOCP \eqref{Eq:FHOCP} is recursively feasible. To this end, we use induction. The problem \eqref{Eq:FHOCP} is feasible for $t=0$ by assumption. Let us assume that the problem \eqref{Eq:FHOCP} is feasible at a generic time step $t$ and let the optimal control sequence be $U^*(t)=[u^*(t|t),\ldots,u^*(t+N-1|t)]$, and its corresponding sequence of predicted regressor vectors be $\varphi^*(k|t),\, k=t+1,\hdots,t+N$. Then, a possible feasible control sequence at $t+1$ is $U(t+1)=[u^*(t+1|t),\ldots,u^*(t+N-1|t),u^*(t+N-1|t)]$. This sequence satisfies constraints \eqref{Eq:input_rate} and \eqref{Eq:end_constraint}. In addition, we note that the predicted regressor vectors $\varphi(k|t+1), \, k=t+2,\hdots, t+N+1$ that correspond to the input sequence $U(t+1)$, by construction satisfy the equalities $\varphi(k|t+1)=\varphi^*(k|t)$, for $k\in [t+2,t+N]$ and that from \eqref{Eq:end_constraint} it follows that $\varphi(t+N+1|t+1)=\varphi^*(t+N|t)$. Moreover, we note that from Lemma \ref{L:predicted_FPS}, it holds that $\mathcal{F}(k|t+1)\subseteq \mathcal{F}(k|t), \forall k\in[t+1,t+N]$ In addition, we note that $\mathcal{F}(t+N+1|t+1)=\mathcal{F}(t+N|t)=\Omega$. Based on this, the sequence of inputs $U(t+1)$ satisfies the output constraints \eqref{Eq:output}, which means that the constraints \eqref{Eq:relaxed_constraints1} are feasible and hence the FHOCP \eqref{Eq:FHOCP} has a feasible solution. Repeating this argumentation for all $t>0$, it can be concluded that the FHOCP \eqref{Eq:FHOCP} remains feasible $\forall t>0$. From this and Lemma \ref{L:nonempty_membership}, the other claim of the Theorem follows directly.$\hfill \blacksquare$

\normalsize
\bibliographystyle{plain}

\end{document}